\documentclass{cpbtex}
%It can only be compiled using the XeLaTeX compiler.

\usepackage[scheme=plain,UTF8] {ctex} 
%begin the Chinese environment

\usepackage{graphicx}%
\usepackage{multirow}%
\usepackage{amsmath,amssymb,amsfonts}%
\usepackage{amsthm}%
\usepackage{mathrsfs}%
\usepackage[title]{appendix}%
\usepackage{color,xcolor}%
\usepackage{textcomp}%
\usepackage{manyfoot}%
\usepackage{algorithm}%
\usepackage{algorithmicx}%
\usepackage{algpseudocode}%
\usepackage{listings}%
\usepackage{adjustbox}  % For zooming
%adding package

\usepackage{tikz}
\usepackage{physics}
\usepackage{mathtools}
\usepackage{bm}
\usepackage{bbm}
\usepackage{diagbox}
\usepackage{soul}
\usepackage{array}
\usepackage{subeqnarray}
\usepackage{cases}
\usepackage{indentfirst} 
\setlength{\parindent}{1.5em} % Set the indentation for the first paragraph of a chapter (the default is 15pt≈1.5em)
\usepackage{tabularx}   
\usepackage{booktabs}

\usepackage{subeqnarray}
\usepackage{cases}
\usepackage{indentfirst}
\usepackage{ragged2e} 
\usepackage{makecell}

\raggedbottom
%%\unnumbered% uncomment this for unnumbered level heads

% \soulregister{\cite}7
% \soulregister{\ref}7 

\newcommand{\ER}{Erd\"{o}s-R\'{e}nyi }
\newcommand{\vgamma}{\bm{\gamma}}
\newcommand{\vbeta}{\bm{\beta}}

\newtheorem{definition}{Definition}
\newtheorem{theorem}{Theorem}

\begin{document}

\title{Auxiliary-qubit-free quantum approximate optimization algorithm for the minimum dominating set problem}

\author{Guanghui Li(李广辉)$^{1}$, Xiaohui Ni(倪晓慧)$^{1}$, Junjian Su(苏俊健)$^{1}$, Sujuan Qin(秦素娟)$^{1}$, \\ Fenzhuo Guo(郭奋卓)$^{1}$, Bingjie Xu(徐兵杰)$^{2}$, Wei Huang(黄伟)$^{2}$ \thanks{Corresponding author. E-mail:~huangwei096505@aliyun.com}, and Fei Gao(高飞)$^{1}$\thanks{Corresponding author. E-mail:~gaof@bupt.edu.cn}\\
$^{1}${State Key Laboratory of Networking and Switching Technology, Beijing University of Posts } 
\\ {and Telecommunications, Beijing, 100876, China}\\  
$^{2}${National Key Laboratory of Security Communication, Institute of Southwestern Communication, }
\\ {Chengdu, 610041, China} \\ 
} % The line break was forced via \\ 

%\date{\today}
\maketitle

\begin{abstract}
Quantum Approximate Optimization Algorithm (QAOA) is a promising framework for solving combinatorial optimization problems on near-term quantum devices. One such problem is the Minimum Dominating Set (MDS), which is known to be NP-hard. Existing QAOA algorithms for this problem typically require numerous auxiliary qubits, which increases circuit overhead and hardware requirements. In this paper, we propose an auxiliary-qubit-free QAOA algorithm based on Hamiltonian evolution (AQFH-QAOA) for the MDS problem. Unlike previous studies that require numerous auxiliary qubits, our algorithm eliminates the need for auxiliary qubits, thus significantly reducing circuit overhead. In addition, we present an auxiliary-qubit-free optimized implementation of the previously proposed Guerrero's QAOA algorithm (AQFG-QAOA) by utilizing gate decomposition techniques. Through a detailed analysis of gate complexity, we evaluate the applicability of these two algorithms. Numerical experiments demonstrate that our proposed algorithm achieves competitive solution quality compared to existing QAOA algorithms, making it a promising candidate for implementation on near-term quantum devices.
\end{abstract}

\textbf{Keywords:} Quantum algorithm, Quantum approximate optimization algorithm, Minimum dominating set, Boolean algebra identities
%no more than four sets of keywords should be provided

\textbf{PACS:} 03.67.Ac, 03.67.Lx
%no more than four PACS codes should be provided: check https://cpb.iphy.ac.cn/UserFiles/File/PACS2010Regular-Edition.pdf

\section{Introduction}
\label{sec1}

Quantum algorithms offer the potential for significant computational speedups over classical algorithms by exploiting fundamental quantum mechanical principles. These algorithms have demonstrated advantages in multiple fields, including cryptanalysis~\cite{wu2024enhanced}, mathematical optimization~\cite{zhang2024quantum,xu2024quafu}, and machine learning~\cite{lloyd2014quantum,wu2022quantum,zeguendry2023quantum,yu2025flexible,gao2023quantum,su2025topology}. However, many well-known quantum algorithms, such as Shor's algorithm~\cite{shor1994algorithms}, Grover's algorithm~\cite{grover1996fast}, and HHL algorithm~\cite{harrow2009quantum}, rely on fault-tolerant quantum hardware with a large number of high-fidelity qubits, which is beyond the capabilities of current quantum devices.

To circumvent these limitations, Variational Quantum Algorithms (VQAs) have emerged as a promising framework. VQAs work by combining Parameterized Quantum Circuits (PQCs) with classical optimization routines to iteratively minimize an objective function~\cite{cerezo2021variational}. This hybrid quantum-classical approach is well-suited for Noisy Intermediate-Scale Quantum (NISQ) devices~\cite{preskill2018quantum}. VQAs can be categorized by their application domains, including the Variational Quantum Eigensolver (VQE) for quantum chemistry~\cite{peruzzo2014variational} and the Quantum Approximate Optimization Algorithm (QAOA) for combinatorial optimization~\cite{farhi2014quantum}. 

QAOA was originally proposed for Unconstrained Combinatorial Optimization Problems (UCOPs) and has become a widely studied quantum optimization method. It encodes the objective function into an objective Hamiltonian whose ground state corresponds to the optimal solution and employs a PQC to find this state. The PQC is constructed by alternating the unitary evolution operator of the objective Hamiltonian and the mixing Hamiltonian. As a promising approach, QAOA has been successfully applied to various UCOPs, including tail assignment~\cite{vikstaal2020applying}, vehicle routing~\cite{azad2022solving}, and job shop scheduling~\cite{kurowski2023application}. For problems with hard constraints, the general strategy of QAOA is to utilize the penalty method to incorporate the constraints as penalty terms into the objective function to construct the objective Hamiltonian~\cite{hadfield2017quantum}. This strategy aims to penalize infeasible solutions to enforce the hard constraints. Although infeasible solutions that violate constraints may still appear, feasible solutions usually have a higher probability in the output~\cite{saleem2023approaches}. This extends the applicability of QAOA to a wider range of Constrained Combinatorial Optimization Problems (CCOPs)~\cite{pan2024solving,li2025performance,li2025quantum,ni2025progressive}.

The Minimum Dominating Set (MDS) problem is a representative CCOP with broad applications in fields such as wireless networks, biological networks, and social networks. Its goal is to find the smallest subset of vertices such that every vertex in the graph is either included in this subset or adjacent to at least one vertex in it~\cite{ore1962theory}. As an NP-hard problem, various exact algorithms and approximate algorithms have been proposed to solve it~\cite{fomin2005exact,casado2023iterated}. 

In addition to the classical algorithms, recent studies have also explored the potential of QAOA in solving the MDS problem. Dinneen~\textit{et al.} introduced surplus variables to convert the inequality constraints of the MDS problem into equalities, and then applied the penalty method to convert the problem into a UCOP, providing a Quadratic Unconstrained Binary Optimization (QUBO) model for the MDS problem~\cite{dinneen2017formulating}. Pan~\textit{et al.} adopted a similar approach but refined the penalty term range and reduced the number of surplus variables for vertices with small degrees~\cite{pan2024qubo}. Both methods can be directly mapped into the objective Hamiltonian to construct the corresponding quantum circuit. However, they require a large number of auxiliary qubits, with the worst-case number of qubits reaching $\mathcal{O} \left ( n+\log_{2}{n}  \right ) $ and $n+2m$, respectively, where $n$ and $m$ represent the number of vertices and edges. Unlike them, Guerrero proposed a QAOA that avoids introducing surplus variables by constructing two clauses to encode the objective function and hard constraints of the MDS problem~\cite{guerrero2020solving}. However, this algorithm requires multiple high-level multi-OR-controlled phase gates, whose worst-case decomposition requires $n$ auxiliary qubits, which poses a challenge to the limited qubit resources on current NISQ devices.

In this study, we propose an Auxiliary-Qubit-Free QAOA algorithm based on Hamiltonian evolution (AQFH-QAOA) for the MDS problem, which is specifically designed to address the qubit limitations of NISQ devices. Different from previous algorithms, the proposed algorithm transforms the inequality constraints of the MDS problem into equalities via Boolean algebra identities, thus eliminating the use of auxiliary qubits in the constructed quantum circuit. Furthermore, we provide an Auxiliary-Qubit-Free optimized implementation of Guerrero's QAOA (AQFG-QAOA) by exploiting gate decomposition techniques. Subsequently, we analyze and compare the gate complexity of both algorithms and discuss their respective applicability on representative 3-regular graphs and \ER (ER) random graphs. The conclusions show that for graphs with a smaller average degree, such as 3-regular graphs, the AQFH-QAOA algorithm is preferable due to its lower gate complexity. For \ER (ER) random graphs with an edge probability of 0.5, the preferable algorithm depends on the problem size. Numerical experiments demonstrate that the proposed algorithm achieves competitive solution quality compared to existing QAOA algorithms. Additionally, an ablation study based on multi-angle QAOA indicates that the performance of the AQFH-QAOA algorithm can be further improved by replacing shared circuit parameters with independent ones.

The rest of this paper is organized as follows: In Section~\ref{sec2}, we provide background information and a review of related work. In Section~\ref{sec3}, we present our proposed AQFH-QAOA algorithm for the MDS problem. In Section~\ref{sec4}, we introduce the AQFG-QAOA algorithm and perform an algorithm comparison. In Section~\ref{sec5}, we conduct numerical simulations and perform an ablation study to explore performance improvements. Finally, we conclude this work and provide prospects for future research in Section~\ref{sec6}.

\section{Background}
\label{sec2}

In this section, we provide some background information related to this study in three parts. First, we present the definition of the Minimum Dominating Set (MDS) problem. Second, we provide a brief introduction to the Quantum Approximate Optimization Algorithm (QAOA). Finally, we review existing research related to this work, focusing on three representative algorithms.

\subsection{Introduction to the MDS problem}
\label{sec2.1}

Given an undirected graph $G=\left (V, E \right) $, where $V$ is the set of vertices and $E$ is the set of edges, a subset $D\subseteq V$ is called a dominating set if every vertex $v \in V$ is either in $D$ or adjacent to at least one vertex in $D$. The goal of the MDS problem is to find a dominating set with the smallest number of vertices.

Let $x_{i} \in \left \{ 0,1 \right \} $ be a binary variable associated with each vertex $v_{i}\in V$, where $x_{i}=1$ if the vertex $v_{i}$ is included in the dominating set, and $x_{i}=0$ otherwise. $N(v_{i})$ represents the set of neighbors of vertex $v_{i}$. The MDS problem can be formulated as follows:
\begin{align}
    & \min_{x_{i} \in \left \{0,1\right\}}    \quad  \sum_{ v_{i} \in V} x_{i},  
    \label{eq: objective function} 
    \\
    & \ \quad \text{s.t.} \quad \quad x_{i} + \sum_{ v_{j} \in N(v_{i})} x_{j} \ge 1, \quad  \forall \, v_{i} \in V .
    \label{eq: mds constraint}
\end{align}

\subsection{Introduction to QAOA } 
\label{sec2.2}

The Quantum Approximate Optimization Algorithm (QAOA) is a Variational Quantum Algorithm (VQA) designed to tackle combinatorial optimization problems on near-term quantum devices. It was first proposed by Farhi~\textit{et al.}~\cite{farhi2014quantum} as a promising approach for finding approximate solutions to optimization problems. QAOA encodes an optimization problem as a quantum Hamiltonian whose ground state corresponds to the optimal solution of the problem. The core is to build a Parameterized Quantum Circuit (PQC) to prepare this ground state through a set of adjustable parameters. The algorithm starts from the ground state of an initial Hamiltonian and evolves the quantum state through an alternating sequence of two parameterized unitary operators. One of the operators is derived from the objective function of the problem and is referred to as the objective Hamiltonian. The other is called the mixing Hamiltonian, which promotes the exploration of the solution space. 

Formally, let $f(x)$ be an objective function, where $x \in \left \{ 0,1 \right \} ^{\otimes n} $ encodes a candidate solution. QAOA starts from an initial state $\left | \psi_{0} \right \rangle $, which is typically set to a uniform superposition of all computational basis states, i.e., $\left | + \right \rangle ^{\otimes n} $. For a given number of circuit layers $p$, the PQC prepares the quantum state:
\begin{equation}
    \left|\psi_{p}(\boldsymbol{\vec{\gamma }}, \boldsymbol{\vec{\beta }})\right\rangle=U_{M}\left(\beta_{p}\right) U_{P}\left(\gamma_{p}\right) \cdots U_{M}\left(\beta_{1}\right) U_{P}\left(\gamma_{1}\right) \left | \psi_{0} \right \rangle^{\otimes n},
    \label{eq: ansatz state}
\end{equation}
where $U_{P}\left(\gamma\right) = e^{-i\gamma H_{P}} $ is the phase separation operator, i.e., the unitary evolution operator of the objective Hamiltonian $H_{P}$, and $U_{M}\left(\gamma\right) = e^{-i\gamma H_{M}} $ is the mixing operator, i.e., the unitary evolution operator of the mixing Hamiltonian $H_{M}$. Typically, $H_{M}=\sum_{i=1}^{n} \sigma^{x}_{i}$, which is the sum of the Pauli-$X$ operators. The parameters $\boldsymbol{\vec{\gamma }} =\left ( \gamma _{1},\gamma _{2},...,\gamma _{p}  \right ) $ and $\boldsymbol{\vec{\beta }} =\left ( \beta_{1},\beta_{2},...,\beta_{p} \right )$ are optimized using a classical optimizer (e.g., Adam, Nelder-Mead, or COBYLA) to maximize (or minimize) the expected value 
\begin{equation}
    F_{p}(\boldsymbol{\vec{\gamma}}, \boldsymbol{\vec{\beta}}) = \left\langle\psi_{p}(\boldsymbol{\vec{\gamma}}, \boldsymbol{\vec{\beta}})\right| H_{P}\left|\psi_{p}(\boldsymbol{\vec{\gamma}}, \boldsymbol{\vec{\beta}})\right\rangle.
    \label{eq: expected value}
\end{equation}

This quantum–classical hybrid loop continues until the expected value converges or the parameter optimization reaches a predefined number of iterations. The expressive power of PQC increases with the number of layers $p$. In theory, sufficiently large $p$ allows the quantum circuit to approximate the optimal solution arbitrarily well. In practice, circuit depth is constrained by hardware limitations. Despite this, even QAOA with low depth has been shown to perform competitively with classical heuristics on certain problem instances~\cite{zhou2020quantum}.

\subsection{Literature Review}
\label{sec2.3}

This subsection reviews prior works on QAOA for the MDS problem. Dinneen and Hua~\cite{dinneen2017formulating}, as well as Pan and Lu~\cite{pan2024qubo}, encode the MDS problem as a QUBO model but do not explicitly construct the corresponding objective Hamiltonian, although the derivation is straightforward. Nonetheless, their work provides a modeling strategy that bridges classical combinatorial optimization problems and variational quantum algorithms. Different from these two methods, Guerrero designed a QAOA algorithm for the MDS problem without using the QUBO model~\cite{guerrero2020solving}. While this algorithm also requires auxiliary qubits, it saves more circuit overhead compared to the previous two methods. Collectively, these studies provide valuable modeling strategies and confirm the potential of QAOA in addressing the MDS problem.

\subsubsection{Review of Dinneen and Hua's QAOA for the MDS problem}  
\label{sec2.3.1}

To solve the MDS problem using standard QAOA, it must first be reformulated as an Unconstrained Combinatorial Optimization Problem (UCOP). A common approach is the penalty method, which incorporates the problem's constraints into the objective function to penalize violations of the constraints. Since the constraints in the MDS problem are inequalities, Dinneen and Hua introduced surplus variables to convert them into equalities, then derive a QUBO formulation for the MDS problem, making it compatible with the standard QAOA~\cite{dinneen2017formulating}.

The objective function to be minimized is:
\begin{equation}
    F(x)=\sum_{v_{i} \in V} x_{i}+P \sum_{v_{i} \in V} p_{i},
    \label{eq: QUBO formulation of Dinneen and Hua}
\end{equation}
where
\begin{equation}
    p_{i}=\left(1-\left(x_{i}+\sum_{v_{j} \in N\left(v_{i}\right)}x_{j}\right)+\sum_{k=0}^{\left\lfloor\log_2 d_{i}\right\rfloor} 2^{k} y_{i, k}\right)^{2}.
    \label{eq: penalty term of Dinneen and Hua}
\end{equation}

Here, $x_{i} \in \left \{ 0, 1 \right \} $ indicates whether the vertex $v_{i}$ belongs to the dominating set ($x_{i} = 1$) or not ($x_{i} = 0$), the symbol $P$ represents the penalty coefficient, $p_{i}$ represents the penalty term, $N(v_{i})$ denotes the set of neighbors of vertex $v_{i}$ and $d_{i}$ denotes the degree of vertex $v_{i}$. Moreover, $y_{i,k}$ represents the $k$-th binary surplus variable introduced for vertex $v_i$, where $0 \leq k \leq \lfloor \log_2 d_i \rfloor$, $k \in \mathbb{Z} $.

The objective function $F(x)$ is designed to balance two goals: minimizing the number of selected vertices and penalizing vertices that violate the constraints. The first term $\sum_{v_i \in V} x_i$ encourages choosing fewer vertices, and the second term $P \sum_{v_i \in V} p_i$, scaled by a positive penalty constant $P > 1$, penalizes vertices that violate the dominating set constraint. The penalty term $p_i$ equals zero if vertex $v_i$ is dominated, i.e., $x_i = 1$ or has at least one neighbor $x_j = 1$. If vertex $v_i$ is not dominated, the surplus variable expression $\sum_{k=0}^{\left\lfloor\log_2 d_{i}\right\rfloor} 2^{k} y_{i, k}$ adjusts the penalty to balance the gap. This expression can represent integers exceeding $d_i$, ensuring that even in the extreme case where all vertices in the closed neighborhood $\{v_i\} \cup N(v_i)$ are selected, the penalty term can be offset by appropriately adjusting the value of $y_{i,k}$.

Consequently, each vertex requires $\lfloor \log_2d_i \rfloor + 1$ surplus variables to balance the gap to make the penalty term $p_i$ approach zero. The total number of binary variables is:
\begin{equation}
    n +\sum_{v_i \in V} \left( \lfloor \log_2d_i \rfloor + 1 \right) \leq \mathcal{O}(n + n \log_2 n).
\end{equation}
This means that the total number of binary variables used is at most $\mathcal{O}(n + n \log_2 n)$.

The QUBO formulation can be mapped to the objective Hamiltonian of QAOA by substituting the binary variables with Pauli-$Z$ operators. This mapping allows the quantum state to encode candidate solutions and evolve toward the low-energy configuration corresponding to the optimal solution of the minimum dominating set. Using the same initial state and mixing operators as the standard QAOA, the QAOA for solving the MDS problem can be constructed.

\subsubsection{Review of Pan and Lu's QAOA for the MDS problem}
\label{sec2.3.2}

Inspired by Krauss~\textit{et al.}'s use of quantum annealing to solve the maximum flow problem in QUBO form~\cite{krauss2020solving}, Pan and Lu proposed an improved QUBO formulation for the MDS problem~\cite{pan2024qubo}. Compared to Dinneen and Hua's model, their approach adjusts the coefficient of the last surplus variable in the surplus variable expression in Eq.~\ref{eq: penalty term of Dinneen and Hua} to correct its upper bound and refine its range. Additionally, they also reduce surplus variable usage by applying the penalty term proposed by Glover~\textit{et al.}~\cite{glover2022quantum} for vertices with degree $d_i = 0$ and $d_i = 1$, avoiding extra qubits for low-degree vertices.

Similar to Dinneen and Hua's method, they convert the inequality constraints in Eq.~\ref{eq: mds constraint} into equalities via surplus variables and incorporate them into the objective function using the penalty method. The objective function to be minimized is:
\begin{equation}
    F(x) = \sum_{v_i \in V} x_i + P \sum_{v_i \in V} p_i,
    \label{eq: QUBO formulation of Pan and Lu}
\end{equation}
where the penalty term $p_i$ depends on the degree $d_i$ of vertex $v_i$. The specific form is as follows:

\begin{itemize}
    \item When $d_i = 0$, the original constraint reduces to $x_i \geq 1$, and the corresponding penalty term is given by
    \begin{equation}
        p_i = (1 - x_i)^2.
        \label{eq: constraints when d_i = 0}
    \end{equation}
    
    \item When $d_i = 1$, the original constraint becomes $x_i + x_j \geq 1$, where $v_j$ is the sole neighbor of vertex $v_i$. Following the method proposed in Ref.~\cite{glover2022quantum}, the penalty term can be expressed as
    \begin{equation}
        p_i = \left(1 - (x_i + x_j) + x_i x_j\right)^2.
        \label{eq: constraints when d_i = 1}
    \end{equation}
    
    \item When $d_i \geq 2$, surplus variables are introduced to convert the inequality constraints into equalities. Let the surplus variable expression be denoted by $S$. For vertices with $d_i \geq 2$, the inequality constraint is transformed into $x_i + \sum_{v_j \in N(v_i)} x_j - S = 1$. Accordingly, the penalty term becomes
    \begin{equation}
        p_i = \left(1 - \left(x_i + \sum_{v_j \in N(v_i)} x_j\right) + S \right)^2.
        \label{eq: penalty term of Pan and Lu}
    \end{equation}
    The expression $S = \sum_{k=0}^{\lfloor \log_2d_i \rfloor - 1} 2^k y_{i,k} + \left(d_i + 1 - 2^{\lfloor \log_2d_i \rfloor} \right) y_{i,\lfloor \log_2d_i \rfloor}$, where $y_{i,k} \in \{0,1\}$ are auxiliary variables used to represent integers up to $d_i$ to offset the gap with the original constraints.
\end{itemize}

To summarize, vertices with $d_i < 2$ do not need to introduce any surplus variables, while those with $d_i \geq 2$ require $\lfloor \log_2d_i \rfloor + 1$ surplus variables. Thus, the total number of binary variables is less than $n + \sum_{v_i \in V} \left( \lfloor \log_2d_i \rfloor + 1 \right)$. Since
\begin{equation}
    n +\sum_{v_i \in V} \left( \lfloor \log_2d_i \rfloor + 1 \right) \leq n +\sum_{v_i \in V} d_i = n +2m,
\end{equation}
where $n$ and $m$ are the number of vertices and edges, respectively, the total number of binary variables in the worst case of this QUBO formulation is $n + 2m$. The QUBO formulation is mapped to an objective Hamiltonian and combined with the standard mixing Hamiltonian to construct a QAOA circuit, which requires at most $n + 2m$ qubits.

\subsubsection{Review of Guerrero's QAOA for the MDS problem}
\label{sec2.3.3}

Given a bitstring $x = x_0x_1x_2\ldots x_{n-1}$ representing $n$ vertices of a graph, where each bit $x_i \in \{0,1\}$ indicates whether the vertex $v_i$ is included in the dominating set ($x_i = 1$) or not ($x_i = 0$). Guerrero proposed the following objective function to be maximized~\cite{guerrero2020solving}:
\begin{equation}
    C(x)=\sum_{k=0}^{n-1} T_{k}(x) +D_{k}(x)
    \label{eq: objection function of Guerrero}
\end{equation}
where
\begin{equation}
T_{k}(x)=
\begin{cases} 
    1 ,\text{ if the $k$-th vertex is connected to some $i$-th vertex where $x_{i}=1$ }   \\
    0 ,\text{ if otherwise}
\end{cases}
\end{equation}
and
\begin{equation}
D_{k}(x)=
\begin{cases} 
    1  ,\text{ if  $x_{k} = 0$}   \\
    0  ,\text{ if  $x_{k} = 1$}
\end{cases}
\end{equation}

Here, the $T_k(x)$ clause indicates whether a vertex $v_k$ is dominated, so the term $\sum_{k=0}^{n-1} T_k(x)$ counts the total number of vertices that are dominated by the selected set. The $D_k(x)$ clause marks whether a vertex $v_k$ is excluded from the dominating set, so the term $\sum_{k=0}^{n-1} D_k(x)$ measures the size of the non-dominating set. Maximizing the term $\sum_{k=0}^{n-1} T_k(x)$ ensures that as many vertices as possible are dominated, satisfying the constraint of the MDS problem. Simultaneously, maximizing the term $\sum_{k=0}^{n-1} D_k(x)$ is equivalent to minimizing the size of the dominating set. Taken together, the objective function of the problem is formulated as maximizing the sum of these two terms, which balances constraint satisfaction with minimizing the size of the dominating set, aligning with the goal of the MDS problem.

Guerrero built a quantum circuit to implement this objective function, introducing an auxiliary qubit that remains unchanged throughout the circuit. The $D_k(x)$ clause is implemented using a single-controlled phase gate, with the qubit corresponding to vertex $v_k$ as the control qubit and the auxiliary qubit as the target qubit. The implementation of the $T_k(x)$ clause is more complicated, as it depends on the neighbors of vertex $v_k$. Specifically, each $T_k(x)$ clause is implemented through a multi-OR-controlled phase gate, where the qubits corresponding to the vertex $v_k$ and its neighbors are control qubits and the auxiliary qubit is the target qubit. For vertices with different neighbor structures, the control qubits are different.

The mixing Hamiltonian is constructed as in standard QAOA for the MaxCut problem, i.e., $H_M = \sum_{i=0}^{n-1} \sigma_i^x$, where $\sigma_i^x$ denotes the Pauli-$X$ operator acting on qubit $i$. The corresponding unitary evolution is given by $U_M(\beta) = e^{-i\beta H_M} = \prod_{i=0}^{n-1} R_{X}^{(i)}(2\beta)$, where $R_{X}^{(i)}(2\beta)$ represents a rotation operator around the x-axis. The initial state $\ket{\psi_0}$ is prepared as an equal superposition of all computational basis states over the $n$ qubits, which can be achieved by applying Hadamard gates, i.e., $\ket{\psi_0} = H^{\otimes n} \ket{0}^{\otimes n} = \ket{+}^{\otimes n}$. Both the $R_X^{(i)}(2\beta)$ gates and the Hadamard gates are only applied to the qubits corresponding to vertices, excluding the auxiliary qubits. Figure~\ref{Figure1} illustrates the quantum circuit constructed according to this algorithm for the 4-vertex 3-regular graph. All vertices in this graph share the same set of neighbors, so the quantum gate that implements the $T_k(x)$ clause is identical for each vertex.

\begin{figure}
    \centering
    \includegraphics[width=\linewidth]{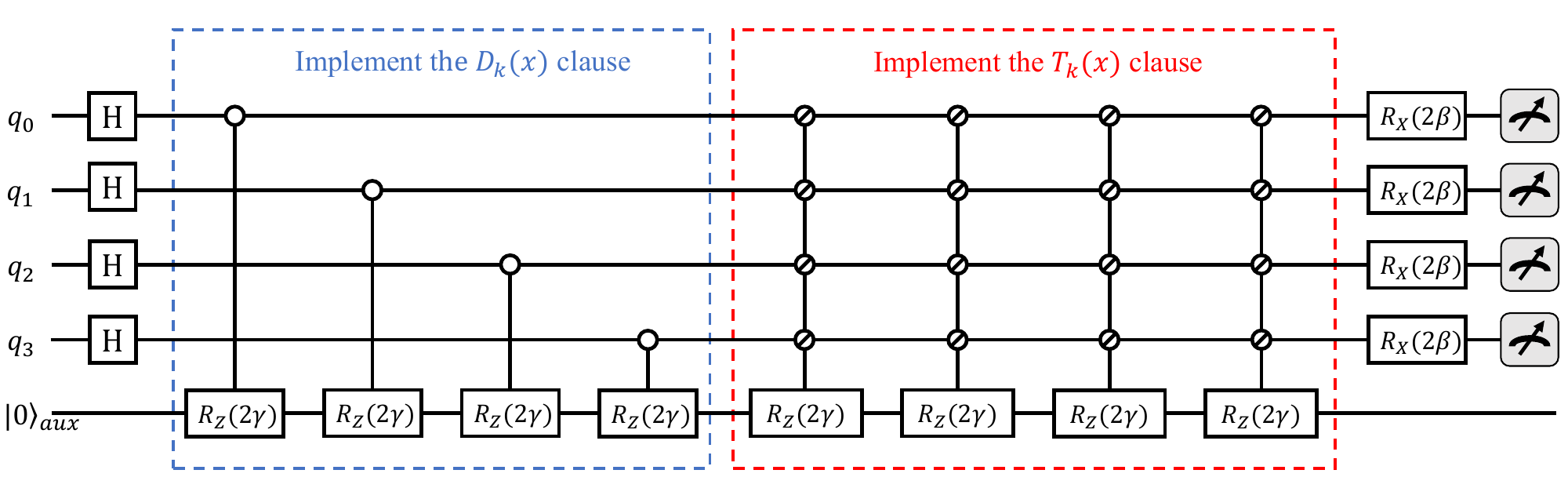}
    \caption{A single-layer quantum circuit for solving the MDS problem on a 4-vertex 3-regular graph, constructed according to Guerrero's QAOA algorithm~\cite{guerrero2020solving}. The qubit $q_k$ corresponds to vertex $v_k$ of the graph, and $\left | 0  \right \rangle_{aux} $ represents the auxiliary qubit. The blue dashed box represents the circuit part that implements all the $D_k(x)$ clauses, while the red dashed box indicates the circuit part that implements all the $T_k(x)$ clauses.} 
    \label{Figure1}
\end{figure}

Appendix~\ref{secA} presents Guerrero's decomposition method for the multi-OR-controlled phase gate and single-controlled phase gate. According to this decomposition, implementing a multi-OR controlled phase gate generally requires extra auxiliary qubits. In the worst case, apart from the one auxiliary qubit introduced previously for the $D_k(x)$ and $T_k(x)$ clauses, implementing the $T_k(x)$ clause corresponding to a vertex of degree $n-1$ requires $n-1$ additional auxiliary qubits. In total, the number of auxiliary qubits can reach $n$. Consequently, the overall qubit requirement of Guerrero's QAOA is at most $2n$.

\subsubsection{Summary of the number of qubits required for existing algorithms}
\label{sec2.3.4}

We summarize the number of qubits required by existing algorithms for the MDS problem in Table~\ref{tab1}, where $n$ and $m$ represent the number of vertices and edges, respectively. Without exception, the quantum circuits of these algorithms contain qubits corresponding to the vertices in the graph, and the number of auxiliary qubits used is related to the number of neighbors of each vertex. Therefore, we summarize the qubit number in the worst-case scenario for these algorithms based on the original literature.

\begin{table}[htbp]
    \centering
    \caption{The qubit number in the worst-case scenario of existing algorithms for solving the MDS problem}
    \vspace{8pt}
    \label{tab1}
    \begin{tabular}{c c c c}
    \hline
    Algorithm & Dinneen and Hua's QAOA & Pan and Lu's QAOA & Guerrero's QAOA  \\ 
    \hline
    Number of auxiliary qubits & $\mathcal{O}(n \log_2 n)$ & $2m$ & $n$  \\ 
    Total number of qubits & $\mathcal{O}(n + n \log_2 n)$ & $n+2m$ & $2n$ \\ 
    \hline
    \end{tabular}
\end{table}

\section{Auxiliary-qubit-free QAOA algorithm based on Hamiltonian evolution}
\label{sec3}

In this section, we introduce an Auxiliary-Qubit-Free QAOA algorithm based on Hamiltonian evolution (AQFH-QAOA) for solving the MDS problem. We first reformulate the integer programming model of the MDS problem, then construct the corresponding objective Hamiltonian. Using this objective Hamiltonian and the standard mixing Hamiltonian, we design a Parameterized Quantum Circuit (PQC) that implements the corresponding unitary evolutions. Finally, we summarize the overall algorithm procedure using pseudocode and analyze the gate complexity of the proposed algorithm.

\subsection{Transformation of integer programming model}
\label{sec3.1}

Consider an undirected graph $G = (V, E)$. Let $N(v_{i})$ denote the set of neighbors of vertex $v_{i} \in V$, $x_i \in \{0,1\}$ be the binary decision variable associated with vertex $v_i$, where $x_i = 1$ if vertex $v_i$ belongs to the dominating set $D$, and $x_i = 0$ otherwise. The integer programming model of the MDS problem is given in Section~\ref{sec2.1}. However, since the constraints are inequalities, this model is not compatible with the standard QAOA framework.

To address this issue, we use Boolean algebra formulations to transform the inequality constraints and reformulate the integer programming model. Inspired by the two clauses proposed by Guerrero~\cite{guerrero2020solving}, the model can be equivalently rewritten in the following form:
\begin{align}
    & \max_{x_{i} \in \left \{0,1\right\}}    \quad  \sum_{ v_{i} \in V} (1-x_{i}) ,  
    \label{eq: objective function1} 
    \\
    & \ \quad \text{s.t.} \quad \quad x_{i} \vee \left( \bigvee_{v_{j} \in N(v_{i})} x_{j} \right) = 1, \quad \forall \ v_{i} \in V .
    \label{eq: mds constraint1}
\end{align}

Eq.~\ref{eq: objective function1} reformulates the objective function expressed in Eq.~\ref{eq: objective function} by converting the problem of minimizing the size of the dominating set into the equivalent task of maximizing the size of the non-dominated set. Meanwhile, Eq.~\ref{eq: mds constraint1} rewrites the inequality constraints in Eq.~\ref{eq: mds constraint} as equality constraints using Boolean algebra. 

The Boolean expression $x_{i} \vee \left( \bigvee_{v_{j} \in N(v_{i})} x_{j} \right)$ evaluates to 1 if vertex $v_i$ is dominated by itself or at least one of its neighbors, and 0 otherwise. Since the maximum value of this expression is 1, the constraint in Eq.~\ref{eq: mds constraint1} can be equivalently enforced by maximizing its value. Accordingly, the integer programming model of the MDS problem is further expressed as follows:
\begin{align}
    & \max_{x_{i} \in \left \{0,1\right\}}  \quad  \sum_{ v_{i} \in V} (1-x_{i}) , 
    \label{eq: objective function2} 
    \\
    & \max_{x_{i} \in \left \{0,1\right\}}  \quad  \sum_{v_{i} \in V} \left( x_{i} \vee \left( \bigvee_{v_{j} \in N(v_{i})} x_{j} \right) \right).
    \label{eq: mds constraint2}
\end{align}

To convert the Boolean expression in Eq.~\ref{eq: mds constraint2} into an arithmetic expression form suitable for Hamiltonian construction, we prove the following theorem.

\begin{theorem}
    Let $x_i \in \{0, 1\}$ for all $v_i \in V$, the following identity holds: 
    \begin{equation}
        x_{i} \vee \left( \bigvee_{v_{j} \in N(v_{i})} x_{j} \right) = 1 - (1 - x_{i}) \prod_{v_{j} \in N(v_{i})} (1 - x_{j}).
    \end{equation}
    \label{theorem1}
\end{theorem}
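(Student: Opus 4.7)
The plan is to reduce the identity to the generalized De Morgan's law for binary variables, which can be expressed arithmetically because negation, conjunction, and disjunction of $\{0,1\}$-valued variables all admit closed-form polynomial expressions. First, I would establish the two-variable base identity: for any $a, b \in \{0,1\}$,
\begin{equation*}
    a \vee b = 1 - (1-a)(1-b).
\end{equation*}
This is immediate by direct enumeration of the four possible assignments of $(a,b)$, since $(1-a)(1-b)$ is $1$ exactly when $a=b=0$ and $0$ otherwise, which is precisely $1 - (a \vee b)$.

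Next I would extend this to an arbitrary finite disjunction by induction on the number of terms. Specifically, for binary variables $z_1, z_2, \ldots, z_k$, the claim is
\begin{equation*}
    \bigvee_{\ell=1}^{k} z_{\ell} = 1 - \prod_{\ell=1}^{k}(1 - z_{\ell}).
\end{equation*}
The base case $k=1$ is the trivial identity $z_1 = 1 - (1-z_1)$. For the inductive step, I would apply the two-variable identity to the pair $\left(\bigvee_{\ell=1}^{k-1} z_{\ell},\, z_{k}\right)$, noting that $\bigvee_{\ell=1}^{k-1} z_{\ell}$ is itself a $\{0,1\}$-valued quantity, and then substitute the inductive hypothesis to obtain $1 - \prod_{\ell=1}^{k-1}(1-z_{\ell}) \cdot (1 - z_k)$, which rearranges to the desired product form.

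Finally, I would specialize the generalized identity to the index set $\{i\} \cup \{j : v_j \in N(v_i)\}$, which directly yields
\begin{equation*}
    x_{i} \vee \left(\bigvee_{v_{j} \in N(v_{i})} x_{j}\right) = 1 - (1-x_i)\prod_{v_j \in N(v_i)}(1 - x_j),
\end{equation*}
completing the proof. There is no real obstacle here: the only subtlety is ensuring that the two-variable base identity is applied to genuine $\{0,1\}$-valued quantities in the inductive step, which holds because any Boolean disjunction of binary variables is itself binary. One could alternatively phrase the argument in a single step via De Morgan's law, rewriting the OR as $\neg \bigwedge_{\ell}(\neg z_{\ell})$ and then substituting $\neg z_{\ell} = 1 - z_{\ell}$ together with the fact that conjunction of binary variables equals their product; the inductive route above is essentially a structured unfolding of this observation.
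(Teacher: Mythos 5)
Your proof is correct and follows essentially the same route as the paper: both establish the two-variable identity $a \vee b = 1-(1-a)(1-b)$ by enumeration, extend it to finite disjunctions by induction, and then specialize to the index set $\{i\} \cup \{j : v_j \in N(v_i)\}$. Your additional remarks on why the inductive step is valid (the partial disjunction being $\{0,1\}$-valued) and the De Morgan reformulation are fine but not needed beyond what the paper does.
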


\begin{proof}
We begin with a basic Boolean algebra identity: for any $a, b \in \{0, 1\}$, the OR operation satisfies
\begin{equation}
    a \lor b = 1 - (1 - a)(1 - b).
    \label{eq: eq1 in the proof}
\end{equation}

This identity can be verified by checking all possible values of $a$ and $b$. By induction, it generalizes to any finite disjunction:
\begin{equation}
    \bigvee_{k=1}^m x_k = 1 - \prod_{k=1}^m (1 - x_k),
    \label{eq: eq2 in the proof}
\end{equation}
where each $x_k \in \{0, 1\}$.

Now, let us define $S_{i} = \left\{ x_{j} \mid v_{j} \in N(v_{i}) \right\}$ to denote the set of variables corresponding to the neighbors of vertex $v_i$. Then the Boolean expression $x_i \lor \left( \bigvee_{x_j \in S_i} x_j \right)$ is equivalent to the disjunction over the set $\{ x_{i} \} \cup S_{i}$. By applying the generalized OR identity, we obtain
\begin{equation}
    x_i \lor \left( \bigvee_{x_j \in S_i} x_j \right) = 1 - \prod_{x \in \{x_i\} \cup S_i  } (1 - x).
    \label{eq: eq3 in the proof}
\end{equation}

The right-hand side of Eq.~\ref{eq: eq3 in the proof} can be rewritten as $1 - (1 - x_i) \prod_{x_j \in S_i} (1 - x_j)$, which completes the proof.

\end{proof}

Using the identity in Theorem~\ref{theorem1}, the objective function of the MDS problem can be reformulated as:
\begin{equation}
    \max_{x_{i} \in \left \{0,1\right\}} \quad \sum_{v_i \in V} (1 - x_i) + \lambda \sum_{v_i \in V} \left( 1 - (1 - x_i) \prod_{v_j \in N(v_i)} (1 - x_j) \right),
    \label{eq: maximize objective function}
\end{equation}
where the first term encodes the goal of maximizing the size of the non-dominated set and the second term introduces a penalty for non-dominated vertices. The penalty coefficient $\lambda > 1$ is typically chosen empirically to strike a balance between constraint enforcement and objective optimization.

\subsection{PQC construction }
\label{sec3.2}

In this section, we construct the PQC for the AQFH-QAOA algorithm, which consists of three fundamental components: the initial state, the phase separation operator, and the mixing operator.

\subsubsection{The initial state }
\label{sec3.2.1}

The initial state is set to the uniform superposition over all computational basis states:
\begin{equation}
    |\psi_0\rangle = |+\rangle^{\otimes n} = \left( \frac{|0\rangle + |1\rangle}{\sqrt{2}} \right)^{\otimes n}.
\end{equation}

This state is prepared by applying Hadamard gates to $\ket{0} ^{\otimes n}$, providing an unbiased starting point for the subsequent variational optimization.

\subsubsection{The phase separation operator }
\label{sec3.2.2}

The phase separation operator $U_P(\gamma)= e^{-i \gamma H_P}$ is defined as the unitary evolution of the objective Hamiltonian $H_P$. The objective Hamiltonian $H_P$ is derived from the transformed objective function of the MDS problem, which encodes both the objective and constraints of the MDS problem. 

Consider an undirected graph $G = (V, E)$, where $V = \{v_0, v_1, \ldots, v_{n-1}\}$, and suppose that the neighbors of vertex $v_i$ are $v_p, v_{p+1}, \ldots, v_q$. The objective function to be maximized in Eq.~\ref{eq: maximize objective function} can be equivalently reformulated as the objective function to be minimized as follows:
\begin{equation}
    \min_{x_{i} \in \left \{0,1\right\}} \quad -\sum_{i=0}^{n-1} \left(1 - x_i \right) - \lambda \sum_{i=0}^{n-1} \left( 1 - \left(1 - x_i\right) \prod_{j=p}^{q} \left(1 - x_j\right) \right).
    \label{eq: minimize objective function}
\end{equation}

Using the standard mapping $x_i = \frac{I - \sigma^{z}_{i}}{2}$~\cite{hadfield2021representation}, we replace the binary variable $x_i$ with $\frac{I - \sigma^{z}_{i}}{2}$, where $\sigma^{z}_{i}$ is the Pauli-$Z$ operator acting on qubit $i$. The objective Hamiltonian is then expressed as:
\begin{equation}
	\scalebox{1.0}{$\displaystyle
		\begin{aligned}
			H_P &= -\sum_{i=0}^{n-1} \left( I - \frac{I - \sigma_i^z}{2} \right) - \lambda \sum_{i=0}^{n-1} \left( I - \left(I - \frac{I - \sigma_i^z}{2}\right) \prod_{j=p}^{q} \left(I - \frac{I - \sigma_j^z}{2} \right) \right) \\[1.5ex]
			&= -\sum_{i=0}^{n-1} \frac{I + \sigma_i^z}{2} - \lambda \sum_{i=0}^{n-1} \left( I - \frac{I + \sigma_i^z}{2} \prod_{j=p}^{q} \frac{I + \sigma_j^z}{2} \right) \\[1.5ex]
			&= - \frac{n}{2} \, I - \frac{1}{2} \sum_{i=0}^{n-1} \sigma_i^z - \lambda n I + \frac{\lambda}{2} \sum_{i=0}^{n-1} \left( \left( I + \sigma_i^z\right)  \prod_{j=p}^{q} \frac{I + \sigma_j^z}{2} \right) \\[1.5ex]
			&= -\frac{(2\lambda + 1)n}{2} \, I - \frac{1}{2} \sum_{i=0}^{n-1} \sigma_i^z + \frac{\lambda}{2} \sum_{i=0}^{n-1} \left(  \frac{1}{2^{q - p + 1}} \left( I + \sigma_i^z\right)  \prod_{j=p}^{q} \left( I + \sigma_j^z \right)  \right).
			\label{eq: objective Hamiltonian in multiplication form}
		\end{aligned}
		$}
\end{equation}

By expanding the product terms and combining like terms, the objective Hamiltonian can be further simplified to:
\begin{equation}
	\begin{aligned}
		H_P & = -\frac{(2\lambda + 1)n}{2} \, I - \frac{1}{2} \sum_{i=0}^{n-1} \sigma_i^z + \frac{\lambda}{2} \sum_{i=0}^{n-1} \Biggl( \frac{1}{2^{q - p + 1}} \Biggl( I + \sigma_i^z + \sum_{j=p}^{q} \sigma_j^z + \sum_{j=p}^{q} \sigma_i^z \sigma_j^z \\
		& \qquad  + \sum_{p \le j < k \le q} \sigma_j^z \sigma_k^z  + \sum_{p \le j < k \le q} \sigma_i^z \sigma_j^z \sigma_k^z + \cdots + \sigma_i^z \sigma_p^z \cdots \sigma_q^z \Biggr) \Biggr)
		\\[1.5ex]
		& = -\frac{(2\lambda + 1)n}{2} \, I  + \frac{\lambda}{2} \sum_{i=0}^{n-1}  \frac{1}{2^{q - p + 1}}\, I   + \frac{1}{2} \sum_{i=0}^{n-1} \left( \frac{\lambda-2^{q - p + 1}}{2^{q - p + 1}}\right)  \sigma_i^z + \frac{\lambda}{2} \sum_{i=0}^{n-1} \Biggl( \frac{1}{2^{q - p + 1}} \cdot \\
		& \qquad   \Biggl( \sum_{j=p}^{q} \sigma_j^z + \sum_{j=p}^{q} \sigma_i^z \sigma_j^z+ \sum_{p \le j < k \le q} \sigma_j^z \sigma_k^z  + \sum_{p \le j < k \le q} \sigma_i^z \sigma_j^z \sigma_k^z + \cdots + \sigma_i^z \sigma_p^z \cdots \sigma_q^z \Biggr) \Biggr)
	\end{aligned}
	\label{eq: objective Hamiltonian in simplified form}
\end{equation}

After neglecting the constant term in the Hamiltonian, the phase separation operator can be derived as:
\begin{equation}
\scalebox{1.12}{$\displaystyle
	\begin{aligned}
		U_P(\gamma) & = e^{-i \frac{\gamma}{2} \sum_{i=0}^{n-1} \left( \frac{\lambda-2^{q - p + 1}}{2^{q - p + 1}}\right) \sigma_i^z} \\
		&\ \ \ \cdot e^{-i \frac{\gamma }{2} \lambda \sum_{i=0}^{n-1} \left( \frac{1}{2^{q-p+1}} \left(  \sum_{j=p}^{q} \sigma_j^z + \sum_{j=p}^{q} \sigma_i^z \sigma_j^z+ \sum_{p \le j < k \le q} \sigma_j^z \sigma_k^z  + \sum_{p \le j < k \le q} \sigma_i^z \sigma_j^z \sigma_k^z + \cdots + \sigma_i^z \sigma_p^z \cdots \sigma_q^z \right)  \right)}
		\label{eq: original phase separation operator}
	\end{aligned}
$}
\end{equation}

It is known that when $[A, B] = AB-BA = 0$, the identity $e^{A+B} = e^A \cdot e^B$ holds, and thus $e^{-i(A+B)t} = e^{-iAt} \cdot e^{-iBt}$ holds. Since all Pauli-$Z$ operators commute, the operator can be further factorized into the product of exponential terms, as follows:
\begin{equation}
\scalebox{1.12}{$\displaystyle
	\begin{aligned}
		U_P(\gamma) &= \prod_{i=0}^{n-1} e^{-i \frac{\gamma}{2} \left( \frac{\lambda-2^{q - p + 1}}{2^{q - p + 1}}\right) \sigma_i^z} \cdot  \prod_{i=0}^{n-1}  \Biggl( \prod_{j=p}^{q} e^{-i \frac{\gamma }{2}  \frac{\lambda}{2^{q-p+1}} \sigma_j^z} \cdot \prod_{j=p}^{q} e^{-i \frac{\gamma }{2} \frac{\lambda}{2^{q-p+1}} \sigma_i^z \sigma_j^z} \\
		&  \quad \cdot \prod_{p \le j < k \le q} e^{-i \frac{\gamma }{2} \frac{\lambda}{2^{q-p+1}} \sigma_j^z \sigma_k^z} \cdot \prod_{p \le j < k \le q} e^{-i \frac{\gamma }{2} \frac{\lambda}{2^{q-p+1}} \sigma_i^z \sigma_j^z \sigma_k^z} \cdots e^{-i \frac{\gamma }{2} \frac{\lambda}{2^{q-p+1}} \sigma_i^z \sigma_p^z \cdots \sigma_q^z}   \Biggr) 
	\label{eq: factorized Phase Separation Operator}
	\end{aligned}
$}
\end{equation}

Eq.\ref{eq: factorized Phase Separation Operator} contains multi-qubit operators whose number grows with the neighbor size of each vertex. By expressing these operators as quantum gates, it becomes:
\begin{equation}
\scalebox{1.12}{$\displaystyle
	\begin{aligned}
		U_P(\gamma) &= \prod_{i=0}^{n-1} R_Z^{(i)}\left(  \frac{\lambda-2^{q - p + 1}}{2^{q - p + 1}} \gamma \right)  \cdot \prod_{i=0}^{n-1} \Biggl( \prod_{j=p}^{q} R_Z^{(j)}\left( \frac{\lambda }{2^{q-p+1}} \gamma \right)  \cdot \prod_{j=p}^{q} R_{ZZ}^{(i,j)}\left( \frac{\lambda }{2^{q-p+1}} \gamma \right)
		\\
		& \quad \cdot  \prod_{p \le j < k \le q} R_{ZZ}^{(j,k)}\left( \frac{\lambda }{2^{q-p+1}} \gamma \right) \cdot  \prod_{p \le j < k \le q} R_{ZZZ}^{(i,j,k)}\left( \frac{\lambda }{2^{q-p+1}}\gamma \right) \cdots R_{ZZ \cdots Z}^{(i,p,\dots,q)}\left( \frac{\lambda }{2^{q-p+1}} \gamma \right) \Biggr)
	\label{eq: quantum gate form of phase separation operator}
	\end{aligned}
$}
\end{equation}

Here, $R_Z^{(i)}(\theta) = e^{ -i\theta \sigma _{i }^{z}/2 } $ denotes a single-qubit $R_Z$ gate, $R_{ZZ}^{(j,k)}(\theta)= e^{ -i\theta \sigma _{j}^{z}\sigma _{k}^{z}/2 }$ denotes a two-qubit $R_{ZZ}$ gate, and higher-order operators such as \( R_{ZZZ}^{(j,k,l)}(\theta) \) represent a multi-qubit $R_{ZZZ}$ gate. Figure~\ref{fig2} illustrates the quantum gate decomposition of several operators, from which quantum gates for higher-order operators can be constructed. In fact, each term in $U_P(\gamma)$ can be implemented using only CNOT gates and a single-qubit $R_Z$ gate, which makes it simple to construct a quantum circuit for the phase separation operator $U_P(\gamma)$. However, graphs with high connectivity will produce a large number of multi-qubit operator terms, resulting in excessive circuit depth, which slows down parameter optimization and increases susceptibility to errors.

\begin{figure}
    \centering
    \includegraphics[width=\linewidth]{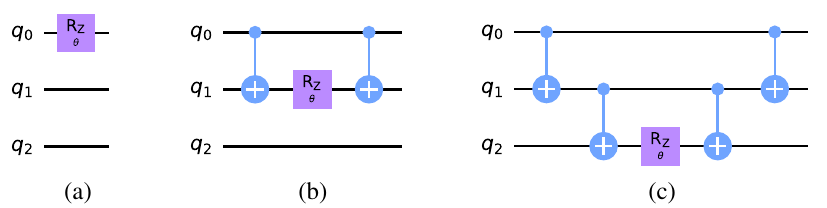}
    \caption{Quantum circuits for (a) the single-qubit operator $e^{-i \theta \sigma_0^z /2}$, (b) the two-qubit operator $e^{-i \theta \sigma_0^z \sigma_1^z /2 }$, and (c) the three-qubit operator $e^{-i \theta \sigma_0^z \sigma_1^z \sigma_{2}^z /2}$. Similarly, a $k$-qubit operator $e^{-i \theta \sigma_0^z \sigma_1^z \cdots \sigma_{k-1}^z}$ can be implemented using $2(k-1)$ CNOT gates and a single-qubit $R_Z$ gate~\cite{hadfield2021representation}.} 
    \label{fig2}
\end{figure}

\subsubsection{The mixing operator }
\label{sec3.2.3}

The mixing operator is defined as $U_M \left ( \beta \right ) = e^{-i \beta H_M}$, where $H_M$ is the mixing Hamiltonian that promotes transitions between quantum states. For the MDS problem, we adopt the standard transverse-field mixing Hamiltonian, i.e.,
\begin{equation}
    H_M = \sum_{i=0}^{n-1} \sigma^{x}_{i},
    \label{eq: mixing Hamiltonian}
\end{equation}
where $\sigma^{x}_{i}$ denotes the Pauli-$X$ operator acting on qubit $i$.

Then the mixing operator is given by:
\begin{equation}
    U_M(\beta)  = e^{-i \beta \sum_{i=0}^{n-1}  \sigma^{x}_{i}} = \prod_{i=0}^{n-1} e^{-i \beta \sigma^{x}_{i}} = \prod_{i=0}^{n-1} R_X^{(i)}(2\beta),
\end{equation}
where $\beta \in \left [ 0,\pi  \right ] $ is a variational parameter. Each term $R_X^{(i)}(2\beta)$ corresponds to a single-qubit $R_X$ gate acting on qubit $i$, enabling bit flips between $|0\rangle$ and $|1\rangle$. This mixing operator ensures that amplitude can be redistributed between different states, which is crucial for the algorithm to explore a wide region of Hilbert space.

\subsection{Algorithm description}
\label{sec3.3}

This subsection presents the pseudocode and procedural details of the proposed AQFH-QAOA algorithm. The complete pseudocode is provided in Algorithm~\ref{alg1}, which adheres to the standard QAOA framework. The algorithm proceeds as follows:

\begin{algorithm}
\caption{AQFH-QAOA algorithm for the MDS problem}
\label{alg1}
\begin{algorithmic}[1]
\renewcommand{\algorithmicrequire}{\textbf{Input:}}
\renewcommand{\algorithmicensure}{\textbf{Output:}}

\Require Graph $G = (V,E) $ with $\left | V \right | = n$; Number of circuit layer $p$; Maximum number of iterations $T$
\Ensure A bitstring $x^\ast \in \left \{ 0,1 \right \}^n$ representing the approximate solution to the MDS problem

\State Prepare the initial state $|\psi_0\rangle = |+\rangle^{\otimes n}$;
\State Construct the objective Hamiltonian $H_{P}$ according to the transformed integer programming model of the MDS problem; 
\State Set the mixing Hamiltonian to $H_M = \sum_{i=0}^{n-1} \sigma^x_i$;
\State Initialize the variational parameters $\boldsymbol{\vec{\gamma}} = (\gamma_1, ..., \gamma_p)$ and $\boldsymbol{\vec{\beta}} = (\beta_1, ..., \beta_p)$;

\While {the number of iterations does not reach $T$}  
    \For {$k = 1$ to $p$}  
        \State Add the quantum circuit of the phase separation operator $U_{P}(\gamma_{k})=e^{-i\gamma_{k}H_{P}}$;
        \State Add the quantum circuit of the mixing operator $U_M(\beta_{k}) = e^{-i\beta_{k} H_M}$;
    \EndFor

    \State Measure the final state $|\psi_p(\boldsymbol{\vec{\gamma}}, \boldsymbol{\vec{\beta}})\rangle$ in the computational basis and evaluate the expected value $F_{p}(\boldsymbol{\vec{\gamma}}, \boldsymbol{\vec{\beta}})$;
    \State Update the parameters $\boldsymbol{\vec{\gamma}}$ and $\boldsymbol{\vec{\beta}}$ using a classical optimizer to minimize $F_{p}(\boldsymbol{\vec{\gamma}}, \boldsymbol{\vec{\beta}})$;
\EndWhile

\State Return a bitstring $x^\ast = x^\ast_{0}x^\ast_{1}...x^\ast_{n-1}$.

\end{algorithmic}
\end{algorithm}

\textbf{Step 1}: Prepare the initial state as a uniform superposition over all computational basis states, i.e., $|\psi_0\rangle = |+\rangle^{\otimes n}$;

\textbf{Step 2}: Construct the objective Hamiltonian $H_P$ based on the transformed integer programming model of the MDS problem given in Section~\ref{sec3.1};

\textbf{Step 3}: Set the mixing Hamiltonian to the standard transverse-field Hamiltonian $H_M = \sum_{i=0}^{n-1} \sigma^x_i$;

\textbf{Step 4}: Initialize the variational parameters $\boldsymbol{\vec{\gamma}} = (\gamma_1, ..., \gamma_p)$ and $\boldsymbol{\vec{\beta}} = (\beta_1, ..., \beta_p)$, with $\gamma_{k} \in [0,2\pi]$ and $\beta_{k} \in [0,\pi]$ for $k = 1, \ldots, p$;

\textbf{Step 5}: For each layer $k = 1, \dots, p$, alternately apply the phase separation operator $U_P(\gamma_k) = e^{-i\gamma_k H_P}$ and the mixing operator $U_M(\beta_k) = e^{-i\beta_k H_M}$ to construct the quantum state $|\psi(\boldsymbol{\vec{\gamma}}, \boldsymbol{\vec{\beta}})\rangle = \big(\prod_{k=1}^{p} U_M(\beta_k)U_P(\gamma_k) \big)|\psi_0\rangle$;

\textbf{Step 6}: Measure the final state $|\psi_p(\boldsymbol{\vec{\gamma}}, \boldsymbol{\vec{\beta}})\rangle$ of the quantum circuit in the computational basis to obtain a bitstring $\tilde{x} \in \left \{ 0,1 \right \}^n $ and evaluate the expected value $F_{p}(\boldsymbol{\vec{\gamma}}, \boldsymbol{\vec{\beta}}) = \langle \psi_p(\boldsymbol{\vec{\gamma}}, \boldsymbol{\vec{\beta}}) | H_P | \psi_p(\boldsymbol{\vec{\gamma}}, \boldsymbol{\vec{\beta}}) \rangle$;

\textbf{Step 7}: Update the parameters $\boldsymbol{\vec{\gamma}}$ and $\boldsymbol{\vec{\beta}}$ using a classical optimizer (e.g., Adam, COBYLA, or Nelder-Mead) to minimize $F_{p}(\boldsymbol{\vec{\gamma}}, \boldsymbol{\vec{\beta}})$;

\textbf{Step 8}: Repeat Steps 5–7 until the maximum number of iterations $T$ is reached;

\textbf{Step 9}: After the optimization is completed, output the bitstring $x^\ast \in \left \{ 0,1 \right \}^n$ with the lowest cost, which approximates the solution to the MDS problem.

\subsection{Gate complexity}
\label{sec3.4}

Next, we analyze the gate complexity of the AQFH-QAOA algorithm. Given an undirected graph $G = (V, E)$ with $n$ vertices, the degree of each vertex is usually not equal. However, for simplicity of analysis, we assume that each vertex has $d$ neighbors, denoted by $v_p, v_{p+1}, \cdots, v_q$, such that $d = q-p+1$.

The number of quantum gates in the quantum circuit of the phase separation operator $U_P(\gamma)$ can be estimated based on Eq.~\ref{eq: quantum gate form of phase separation operator}. Some of these quantum gates can be merged. For example, a single-qubit $R_Z^{(j)}$ gate can always be eliminated by combining the angle parameter with the $R_Z^{(i)}$ gate acting on each qubit. However, this cannot be generalized for two- (multi-)qubit gates, since their acting qubits are not necessarily the same. Based on this, we can obtain the worst-case number of quantum gates in the quantum circuit of the phase separation operator $U_P(\gamma)$:
\begin{equation}
	\scalebox{1.0}{$\displaystyle
		\begin{aligned}
			N_{gate} \le  \ n\cdot R_Z + n\cdot \left ( C_{d}^{1}\cdot R_{ZZ} +C_{d}^{2}\cdot R_{ZZ} + C_{d}^{2}\cdot R_{ZZZ} +\cdots+ C_{d}^{d}\cdot R_{ZZ\cdots Z}  \right )
			\label{eq: gate count}
		\end{aligned}
	$}
\end{equation}	
For clarity, we have omitted the angle parameters of the quantum gates and the specific qubits they act on in the above equation.

As illustrated in Figure~\ref{fig2}, an $R_Z$ gate can be implemented with only a single-qubit gate, an $R_{ZZ}$ gate can be decomposed into 2 CNOT gates and a single-qubit $R_Z$ gate, and an $R_{ZZZ}$ gate can be decomposed into 4 CNOT gates and a single-qubit $R_Z$ gate. More generally, a $k$-qubit $R_{Z \cdots Z}$ gate can be decomposed into $2(k-1)$ CNOT gates and a single-qubit $R_Z$ gate. Based on these decompositions, we can obtain the worst-case number of elementary gates (CNOT gate and single-qubit gates) in the quantum circuit of the phase separation operator in Eq.~\ref{eq: quantum gate form of phase separation operator}. That is
\begin{equation}
	\scalebox{1.0}{$\displaystyle
		\begin{aligned}
			N_{CNOT} \le & \ 0\cdot n + n\cdot \left (2 \cdot C_{d}^{1}+ 2\cdot C_{d}^{2} + 4\cdot C_{d}^{2} + 4 \cdot C_{d}^{3}+\cdots+  2d \cdot C_{d}^{d}  \right )
			\\
			= & \ n \left( d-1 \right) \cdot 2^{d+1} + 2n
			\label{eq: CNOT gate count}
		\end{aligned}
		$}
\end{equation}

\begin{equation}
	\scalebox{1.0}{$\displaystyle
		\begin{aligned}
			N_{Single-qubit} \le & \ 1\cdot n + n\cdot \left (1 \cdot C_{d}^{1}+ 1\cdot C_{d}^{2} + 1\cdot C_{d}^{2} + 1\cdot C_{d}^{3}+ 1 \cdot C_{d}^{3}+\cdots+  1\cdot C_{d}^{d}  \right )
			\\
			= & \ n \cdot 2^{d+1}-n\left (d+1 \right ) 
			\label{eq: Single-qubit gate count}
		\end{aligned}
		$}
\end{equation}

Including the $n$ Hadamard gates for preparing the initial state and the $n$ single-qubit $R_X$ gates of the mixing operator, a single-layer quantum circuit constructed by the AQFH-QAOA algorithm requires at most
$n (d-1) \cdot 2^{d+1} +2n$ CNOT gates and $n \cdot 2^{d+1}-n\left (d-1 \right )$ single-qubit gates, where $n$ denotes the number of vertices and $d$ represents the average degree of the vertices. This indicates that the total number of elementary gates used by the algorithm grows polynomially with the number of vertices and exponentially with the average degree of the vertices. 

Appendix~\ref{secA} presents the number of elementary gates in a single-layer quantum circuit constructed using Guerrero's QAOA algorithm for an $n$-vertex undirected graph with an average vertex degree of $d$. Specifically, the single-layer quantum circuit contains $20nd-6n$ CNOT gates and $16nd$ single-qubit gates. As noted in Ref.~\cite{maslov2022depth}, two-qubit gates such as CNOT gates typically dominate circuit depth due to their slower execution times and higher error rates, while single-qubit gates are relatively fast and can often be executed in parallel, contributing negligibly to the overall depth. Accordingly, in this work, we use the number of CNOT gates as a proxy for circuit depth, neglecting the contribution of single-qubit gates.

By approximating $N_{CNOT}$ (in Eq.~\ref{eq: CNOT gate count}) to $20nd-6n$, we can estimate the average vertex degree $d$ in an undirected graph when the AQFH-QAOA algorithm loses its advantage over Guerrero's QAOA using auxiliary qubits. That is
\begin{equation}
	\begin{aligned}
	   n \left( d-1 \right) \cdot 2^{d+1} + 2n \ \approx \  20nd-6n 
	\label{eq: CNOT count estimate}
	\end{aligned}
\end{equation}

Eq.~\ref{eq: CNOT count estimate} can be solved approximately using the bisection method or linear interpolation, yielding a numerical solution $d \approx 3.62$. This indicates that when the average vertex degree does not exceed 3.62, the AQFH-QAOA algorithm has significant advantages over Guerrero's QAOA algorithm in terms of both auxiliary qubits and gate count. In fact, when $d$ is small, the total number of elementary gates of the AQFH-QAOA algorithm can be approximated as $ \mathcal{O}(n)$. For example, for 3-regular graphs, a single-layer quantum circuit constructed by this algorithm requires at most $34n$ CNOT gates and $14n$ single-qubit gates.

\section{Auxiliary-qubit-free optimized implementation algorithm of Guerrero's QAOA}
\label{sec4}

In this section, we apply gate decomposition techniques without auxiliary qubits to implement Guerrero's QAOA as described in Section~\ref{sec2.3.3}. To distinguish this algorithm from Guerrero's QAOA, we refer to it as the AQFG-QAOA algorithm. After decomposing the quantum circuit of this algorithm into elementary gates, we compare its gate complexity with that of the AQFH-QAOA algorithm.

\subsection{Decomposition of multi-OR-controlled phase gates without auxiliary qubits }
\label{sec4.1}

Apart from the one auxiliary qubit used to encode the $T_k(x)$ and $D_k(x)$ clauses, the additional auxiliary qubits in Guerrero's QAOA algorithm come from the decomposition of multi-OR-controlled phase gates, as discussed in Appendix~\ref{secA}. However, such a gate can first be decomposed into a sequence of single-qubit gates and Multi-Controlled $X$ (MCX) gates using the method illustrated in Figure~\ref{fig3}. Subsequently, the resulting MCX gates can be further decomposed into elementary gates of quadratic size without auxiliary qubits~\cite{barenco1995elementary}. Throughout this section, the term ``auxiliary qubit'' excludes the qubit used to encode the $T_k(x)$ and $D_k(x)$ clauses. Accordingly, we consider an $(n+1)$-qubit circuit, where the qubit $\left | 0  \right \rangle _{aux}$ associated with the $T_k(x)$ and $D_k(x)$ clauses is not counted as an auxiliary qubit in our analysis.

Barenco~\textit{et al.} pioneered a systematic method for decomposing MCX gates in Corollary 7.6 of Ref.~\cite{barenco1995elementary}. Subsequently, Refs.~\cite{liu2008analytic,saeedi2013linear,luo2016comment,da2022linear} also reported methods for decomposing MCX gates into elementary gates of quadratic size without auxiliary qubits. These studies aimed to optimize the circuit depth and circuit size of the decomposition proposed by Barenco~\textit{et al.}, but most of them lack precise results regarding the number of elementary gates after decomposition. One notable exception is the work of Liu~\textit{et al.}~\cite{liu2008analytic}, which presents the exact number of CNOT gates and single-qubit gates obtained from MCX gate decomposition. Subsequently, we will evaluate the number of elementary gates after multi-OR-controlled phase gate decomposition using the exact number of elementary gates proposed in the paper of Liu~\textit{et al.}.

\begin{figure}
    \centering
    \includegraphics[width=0.75\linewidth]{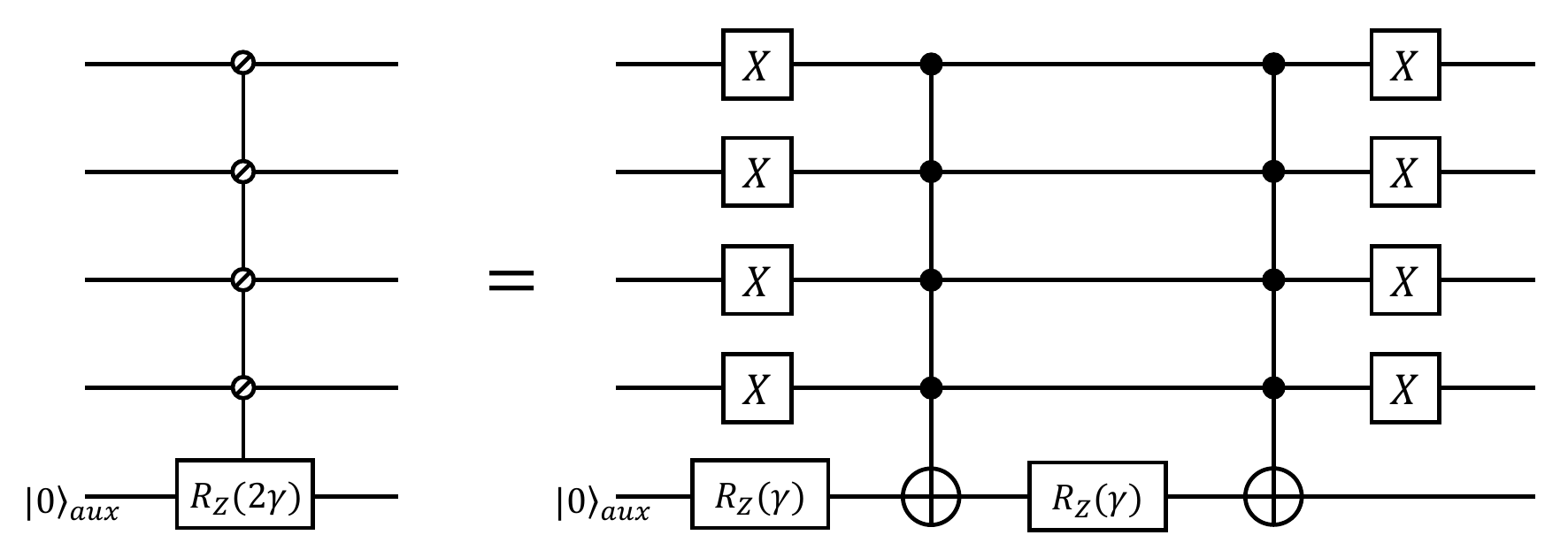}
    \caption{ Decomposition of a multi-OR-controlled phase gate with $n$ control qubits using $2n+2$ single-qubit gates and two $n$-controlled $X$ gates without auxiliary qubits. Here, $n=4$. It employs the relations $R_Z(\theta)R_Z(\theta)=R_Z(2\theta)$, $XR_Z(\theta)X=R_Z(-\theta)$ and $R_Z(\theta)R_Z(-\theta)=I$.}
    \label{fig3}
\end{figure}

For completeness, we briefly review the decomposition of the MCX gate as described in Refs.~\cite{barenco1995elementary,liu2008analytic}. The first step is to decompose an MCX gate with $n$ control qubits (denoted $C^n(X)$) into two $(n-1)$-controlled $X$ gates ($C^{n-1}(X)$), one $(n-1)$-controlled $\sqrt{X}$ gate ($C^{n-1}(\sqrt{X})$), and two single-controlled gates using the circuit identity shown in Figure~\ref{fig4}. In this decomposition, $\sqrt{X}$ denotes a square root of the Pauli-$X$ operator, satisfying $\sqrt{X}^2 = X$. More precisely, $\sqrt{X} = \left ( 1-i \right ) \left ( I+iX \right ) /2$. We note a difference in notation between this work and Ref.~\cite{barenco1995elementary,liu2008analytic}. In their paper, $C^n(X)$ refers to an $n$-qubit gate with $n-1$ control qubits and one target qubit, while in this paper, $C^n(X)$ refers to an $(n+1)$-qubit gate with $n$ control qubits. This distinction arises because the original quantum circuit in Guerrero's QAOA algorithm introduces an auxiliary qubit as the target qubit, while the qubits corresponding to a vertex and its neighbors act as the control qubits.

\begin{figure}
    \centering
    \includegraphics[width=0.7\linewidth]{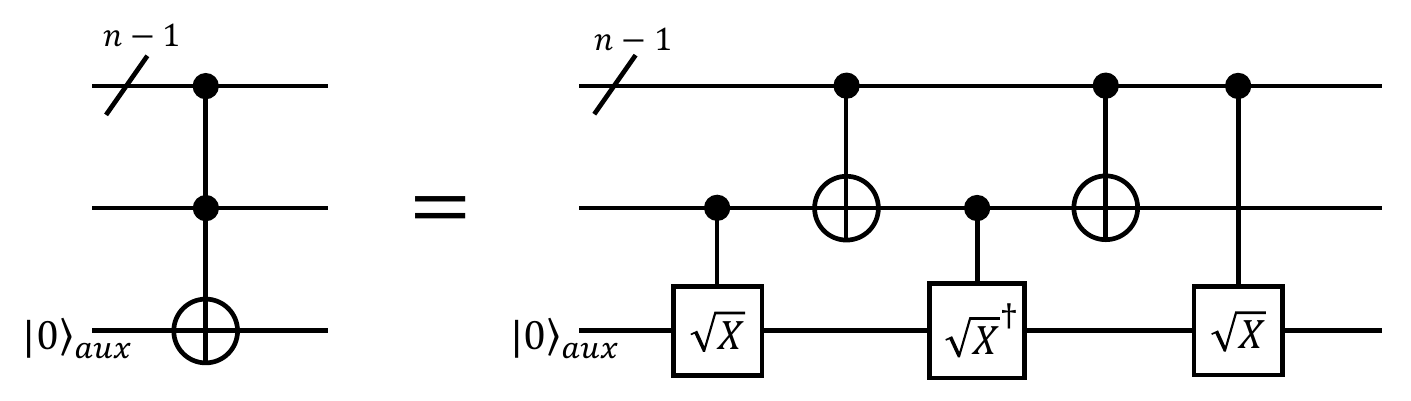}
    \caption{ Decomposition of an $n$-controlled $X$ gate ($C^{n}(X)$) using two $(n-1)$-controlled $X$ gates ($C^{n-1}(X)$), an $(n-1)$-controlled $\sqrt{X}$ gate ($C^{n-1}(\sqrt{X})$), and two single-controlled gates. Here, $\sqrt{X}$ is the square root of $X$, and $\sqrt{X}^\dagger $ is the Hermitian conjugate of $\sqrt{X} $. Adapted from Lemma 7.5 of Ref.~\cite{barenco1995elementary}.}
    \label{fig4}
\end{figure}

\begin{figure}
    \centering
    \includegraphics[width=0.7\linewidth]{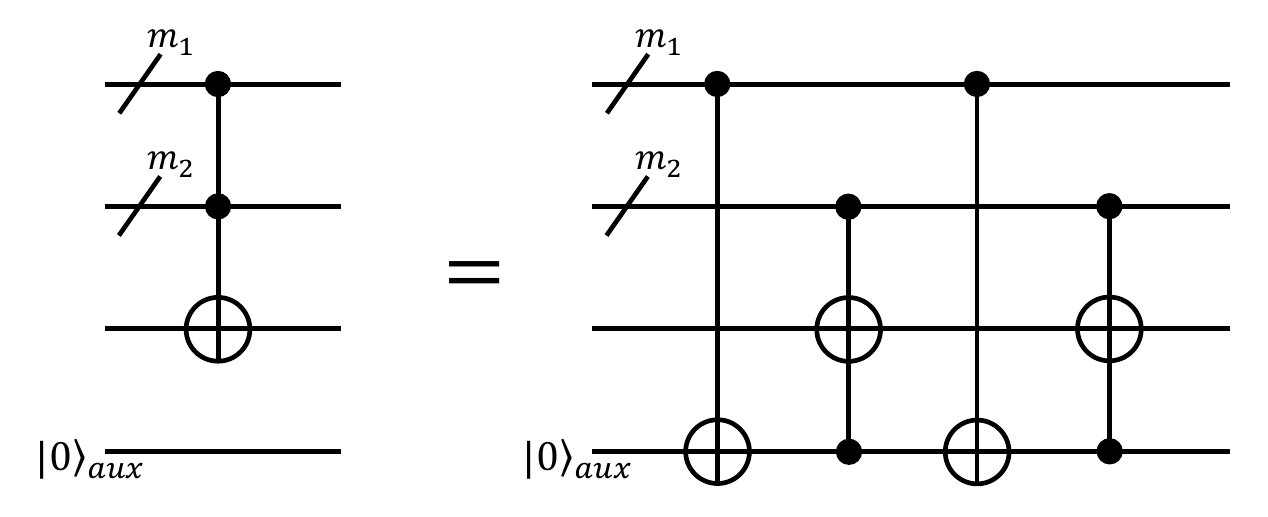}
    \caption{ Decomposition of a $C^{n-1}(X)$ gate using two $m_1$-controlled $X$ gates ($C^{m_1}(X)$) and two $(m_2+1)$-controlled $X$ gates ($C^{m_2+1}(X)$), where $m_1 = \left \lceil \frac{n}{2}  \right \rceil $ and $m_2 = n- m_1-1$. Adapted from Lemma 7.3 of Ref.~\cite{barenco1995elementary}.}
    \label{fig5}
\end{figure}

The next step is to decompose the $C^{n-1}(X)$ gates and $C^{n-1}(\sqrt{X})$ gates produced in the previous step. For the $C^{n-1}(X)$ gate, a naive strategy is to apply the decomposition shown in Figure~\ref{fig4} iteratively. However, this approach results in an exponential growth of the circuit size (Lemma 7.1 of Ref.~\cite{barenco1995elementary}). To avoid this exponential overhead, one can instead employ a single auxiliary qubit to split the $C^{n-1}(X)$ gate into 4 half-sized MCX gates, which can then be further decomposed into Toffoli gates. The corresponding decomposition is illustrated in Figure~\ref{fig5}. Here, the auxiliary qubit is a borrowed auxiliary qubit (initially in an arbitrary state and restored to its original state after the operation), so that no additional auxiliary qubits are required.

Specifically, a $C^{n-1}(X)$ gate can be decomposed into two $m_1$-controlled $X$ gates ($C^{m_1}(X)$) and two $(m_2+1)$-controlled $X$ gates ($C^{m_2+1}(X)$), where $m_1 = \left \lceil \frac{n}{2}  \right \rceil $ and $m_2 = n- m_1-1$. Each $C^{m_1}(X)$ gate can then be further decomposed into $4\left \lceil \frac{n}{2}  \right \rceil-8$ Toffoli gates (Figure 7 in Ref.~\cite{liu2008analytic}), while each $C^{m_2+1}(X)$ gate can be decomposed into $4n-4\left \lceil \frac{n}{2}  \right \rceil-8$ Toffoli gates (Figure 8 in Ref.~\cite{liu2008analytic}). In these decompositions, qubits not involved in the current gate act as borrowed auxiliary qubits, and no additional auxiliary qubits are introduced. It is worth noting that $n \ge 6$ is the applicable condition for the above decompositions. As for the case where $n < 6$, Table 1 in Ref.~\cite{liu2008analytic} lists them.

As for the $C^{n-1}(\sqrt{X})$ gate, it can be further decomposed by recursively applying the circuit identity shown in Figure~\ref{fig4}. This recursion is performed $n-2$ times until the $C^{n-1}(\sqrt{X})$ gate is decomposed into a set of Toffoli gates and single-controlled gates. The Toffoli gates can be implemented using the Congruent Modulo Phase Shifts (CMPS) method~\cite{barenco1995elementary}, which decomposes each Toffoli gate into 3 CNOT gates and 4 single-qubit gates, as illustrated in Figure~\ref{fig6}. The single-controlled gates can be decomposed into elementary gates according to the circuit identity shown in Figure~\ref{fig7}, where $\sqrt[n]{Z} $ denotes the $n$-th root of the Pauli-$Z$ operator, defined as $\sqrt[n]{Z} =\begin{bmatrix}
1  & 0\\
0  & e^{i\pi /n } 
\end{bmatrix}$.

\begin{figure}
    \centering
    \includegraphics[width=0.8\linewidth]{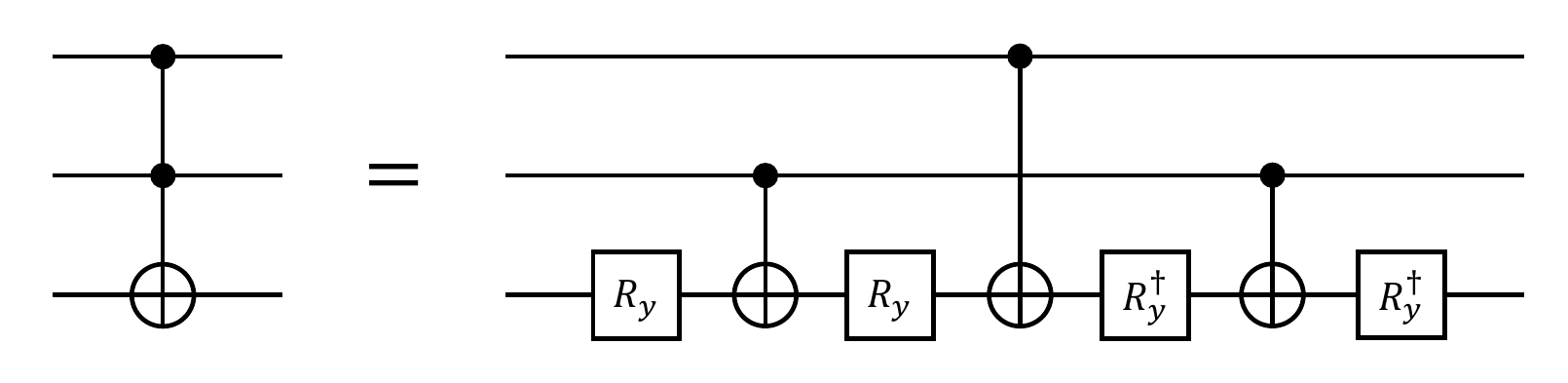}
    \caption{ Decomposition of a Toffoli gate using the Congruent Modulo Phase Shifts (CMPS) method,  where the angle of all $R_y$ ($R_y^{\dagger}$) operators is $\pi/4$. Adapted from Ref.~\cite{barenco1995elementary}.}
    \label{fig6}
\end{figure}

\begin{figure}
    \centering
    \includegraphics[width=0.95\linewidth]{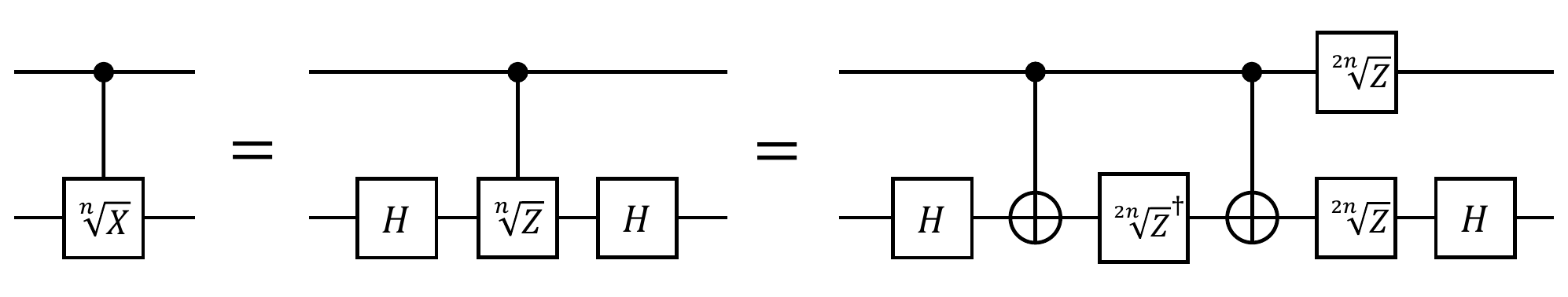}
    \caption{ Decomposition of a single-controlled gate using the circuit identity, where $\sqrt[n]{Z} $ denotes the $n$-th root of $Z$.}
    \label{fig7}
\end{figure}

Ref.~\cite{liu2008analytic} presents a cascaded structure for the iterative decomposition of an arbitrary multi-controlled unitary gate, which is equivalent to decomposing a $C^{n}(X)$ gate when the unitary matrix is $X$. With this decomposition method, when $n \ge 6$, a $C^{n}(X)$ gate can be decomposed into $24n^2-164n+352$ CNOT gates and $ 32n^2-224n+483$ single-qubit gates. It is important to note that the above decomposition assumes the absence of auxiliary qubits. If at least one auxiliary qubit is available, the circuit size of MCX gate decomposition can be reduced from $\mathcal{O}(n^2)$ to $\mathcal{O}(n)$~\cite{maslov2003improved,he2017decompositions}.

Combining the decomposition shown in Figure~\ref{fig3} with the exact number of elementary gates mentioned above, when $n \ge 6$, a multi-OR-controlled phase gate with $n$ control qubits can be decomposed into $48n^2-328n+704$ CNOT gates and $32n^2-222n+485$ single-qubit gates. For the cases $1<n<6$, the corresponding number of elementary gates can be calculated from Table 1 of Ref.~\cite{liu2008analytic}. We summarize these results in Table~\ref{tab2}.

\begin{table}[htbp]
\centering
\caption{ Number of elementary gates after decomposing different types of multi-controlled quantum gates}
\vspace{8pt}
\label{tab2}
\begin{tabular}{clcccccc}
\toprule
\multirow{1}{*}{\centering Gate Type} & \multirow{1}{*}{\centering Elementary Gate} & 2 & 3 & 4 & 5 & & \multicolumn{1}{c}{$n \geq 6$} \\
\midrule
\multirow{3}{*}{\makecell[l]{$n$-controlled $X$ gate (MCX)}} & CNOT  & 6 & 16 & 56 & 132 &  & $24n^2-164n+352$  \\
 & Single-qubit & 8 & 22 & 71 & 163 & & $32n^2-224n+483$    \\
 & Total & 14 & 38 & 127 & 295 & & $56n^2-388n+835$  \\
\\
\multirow{3}{*}{\makecell[l]{$n$-OR-controlled phase gate}} & CNOT  & 12 & 32 & 112 & 264 &  &$48n^2-328n+704$  \\
 & Single-qubit & 14 & 30 & 81 & 175 &  & $32n^2-222n+485$   \\
 & Total & 26 & 62 & 193 & 439 & & $80n^2-550n+1189$  \\
\bottomrule
\end{tabular}
\end{table}

\subsection{Gate complexity }
\label{sec4.2}

In this section, we analyze the gate complexity of the AQFG-QAOA algorithm for different graph structures. Specifically, we consider two representative graph classes: 3-regular graphs and \ER (ER) random graphs with an edge probability of 0.5. In a 3-regular graph, each vertex has a degree of exactly 3, forming a sparse structure when the number of vertices $n$ is large. In contrast, an ER graph $G(n,p_{e})$ exhibits a non-uniform degree distribution determined by the edge probability $p_{e}$~\cite{bollobas2011random}. For a fixed constant $p_{e}$, the expected degree of each vertex is $p_{e}(n-1)$, corresponding to an unstructured probabilistic connectivity model. For the MDS problem on an ER graph, the graph is extremely sparse when $p_{e}$ is close to 0, making the problem relatively easy to solve. When $p_{e}$ approaches 1, the graph approaches a complete graph, and the problem becomes simple again. The most challenging case occurs when $p_{e}$ takes a value between 0 and 1, at which the ER graph is neither too sparse nor too dense. From an information theory perspective, the Shannon entropy quantifies the uncertainty of the edge structure in a random graph, with higher entropy indicating greater structural complexity~\cite{anand2009entropy}. For ER graphs, the Shannon entropy of the edge set is maximized at $p_{e}=0.5$. Accordingly, in this work, we focus on ER graphs with edge probability $p_{e}=0.5$.

For a 3-regular graph with $n$ ($n \ge 4$) vertices, each vertex has 3 neighbors, plus the vertex itself, which corresponds to 4 control qubits in the multi-OR-controlled phase gate that implements the $T_k(x)$ clause. Consequently, each $T_k(x)$ clause is implemented by a 4-OR-controlled phase gate. According to Table~\ref{tab2}, such a gate can be decomposed into 112 CNOT gates and 81 single-qubit gates. Therefore, the circuit part that implements all $T_k(x)$ clauses requires a total of $112n$ CNOT gates and $81n$ single-qubit gates. In addition, using the decomposition shown in Figure~\ref{figA4}, all the single-controlled phase gates that implement all $D_k(x)$ clauses can be decomposed into $2n$ CNOT gates and $4n$ single-qubit gates. Including the $n$ Hadamard gates for preparing the initial state and the $n$ single-qubit $R_X(2\beta)$ gates in the mixing operator, the single-layer quantum circuit of the AQFG-QAOA algorithm on a 3-regular graph with $n$ vertices contains a total of $114n$ CNOT gates and $87n$ single-qubit gates.

For an ER graph $G(n, p_{e})$ with edge probability $p_{e} = 0.5 $, the average degree of each vertex is $\frac{n-1}{2}$, meaning each vertex is connected to an average of $\left \lceil \frac{n-1}{2}  \right \rceil $ other vertices. Including the vertex itself, a multi-OR-controlled phase gate that implements the $T_k(x)$ clause has $\left \lceil \frac{n+1}{2}  \right \rceil $ control qubits. When $n\ge 6$ and $n$ is odd, such a gate can be decomposed into $12n^2-140n+552$ CNOT gates and $8n^2-95n+382$ single-qubit gates. When $n\ge 6$ and $n$ is even, it can be decomposed into $12n^2-116n+424$ CNOT gates and $8n^2-79n+295$ single-qubit gates. Consequently, for $n\ge 6$, implementing all $n$ $T_k(x)$ clauses requires $12n^3-140n^2+552n$ CNOT gates and $8n^3-95n^2+382n$ single-qubit gates when $n$ is odd, and $12n^3-116n^2+424n$ CNOT gates and $8n^3-79n^2+295n$ single-qubit gates when $n$ is even. The number of elementary gates required to implement the $n$ $D_k(x)$ clauses, as well as those used for initial state preparation and the mixing operator, is identical to the corresponding counts for 3-regular graphs. Summing all gates yields the total number of elementary gates needed on an ER graph with edge probability $p_{e}=0.5$ using the AQFG-QAOA algorithm. For the cases $n < 6$, the gate counts must be calculated individually. A summary of these results is provided in Table~\ref{tab3}.

\begin{table}[htbp]
\centering
\caption{ Number of elementary gates after decomposition of a single-layer quantum circuit constructed using the AQFG-QAOA algorithm}
\vspace{8pt}
\label{tab3}
\begin{tabular}{ccccc}
\toprule

\multirow{2}{*}{\centering Graph Type} & \multirow{2}{*}{\centering Graph Size} & \multicolumn{3}{c}{Elementary Gate} \\
\cmidrule(lr){3-5} 
 &   & CNOT & Single-qubit & Total     \\
\midrule
\multirow{1}{*}{\makecell[c]{$n$-vertex 3-regular graph}}  & $n \ge 4$ & $114n$ & $87n$ & $201n$   \\
\\
\multirow{2}{*}{\makecell[c]{$n$-vertex ER graph \\ ($p_{e}=0.5$)}} & ($n \ge 6$, odd) & $12n^3-140n^2+554n$ & $8n^3-95n^2+388n$ & $20n^3-235n^2+942n$  \\
& ($n \ge 6$, even) & $12n^3-116n^2+426n$ & $8n^3-79n^2+301n$ & $20n^3-195n^2+727n$\\
\bottomrule
\end{tabular}
\end{table}

\subsection{Algorithm comparison}
\label{sec4.3}

Here, we compare the gate complexity of the AQFG-QAOA algorithm with that of the AQFH-QAOA algorithm proposed in Section~\ref{sec3}. The comparison is summarized in Table~\ref{tab4}. 

\begin{table}[htbp]
\centering
\caption{Number of elementary gates after decomposition of a single-layer quantum circuit constructed using different algorithms}
\vspace{8pt}
\label{tab4}
\begin{adjustbox}{width=\textwidth, center}
\begin{tabular}{clcccc}
\toprule
\multirow{2}{*}{\centering Graph Type} & \multirow{2}{*}{\centering \makecell[c]{Elementary \\Gate}} & \multicolumn{4}{c}{\centering Algorithm} \\
\cmidrule(lr){3-6} 
  &   &   \multicolumn{2}{c}{ \centering AQFG-QAOA} & \multicolumn{2}{c}{ \centering AQFH-QAOA }     \\
\midrule
& & \multicolumn{2}{c}{ \centering ($n \ge 4$) } &  \multicolumn{2}{c}{ \centering ($n \ge 4$) } \\
\multirow{3}{*}{\makecell[c]{$n$-vertex \\ 3-regular graph}} & CNOT & \multicolumn{2}{c}{\centering $114n$}  & \multicolumn{2}{c}{ \centering $34n$} \\
 & Single-qubit & \multicolumn{2}{c}{\centering $87n$} & \multicolumn{2}{c}{\centering $18n$}  \\
 & Total & \multicolumn{2}{c}{\centering $201n$} & \multicolumn{2}{c}{\centering $52n$}\\
\\
 & & ($n \ge 6$, odd) &  ($n \ge 6$, even) & ($n \ge 2$, odd) &  ($n \ge 2$, even) \\
\multirow{3}{*}{\makecell[c]{$n$-vertex ER \\ graph ($p_{e}=0.5$)}} & CNOT & $12n^3-140n^2+554n$ & $12n^3-116n^2+426n$  & $(n^2-3n) \cdot 2^{\frac{n-1}{2} }+2n$ & $(n^2-2n) \cdot 2^{\frac{n}{2}}+2n$\\
 & Single-qubit & $8n^3-95n^2+388n$ & $8n^3-79n^2+301n$  & $n\cdot 2^{\frac{n+1}{2}}-\frac{n^2}{2}+\frac{3n}{2}$ & $2n \cdot 2^{\frac{n}{2}}-\frac{n^2}{2}+n$ \\
 & Total & $20n^3-235n^2+942n$ & $20n^3-195n^2+727n$ & $(n^2-n) \cdot 2^{\frac{n-1}{2} }-\frac{n^2}{2}+\frac{7n}{2}$ &  $n^2 \cdot 2^{\frac{n}{2}}-\frac{n^2}{2}+3n$\\
\bottomrule
\end{tabular}
\end{adjustbox}
\end{table}

For 3-regular graphs, the AQFH-QAOA algorithm uses fewer CNOT gates and single-qubit gates than the AQFG-QAOA algorithm, which suggests that the MDS problem on 3-regular graphs is more suitable for solving using the AQFH-QAOA algorithm. For ER graphs with edge probability $p_{e}=0.5$, the number of elementary gates used in the AQFG-QAOA algorithm grows polynomially with the number of vertices, whereas the gate count of the AQFH-QAOA algorithm grows exponentially. Nevertheless, the practical applicability of each algorithm can be evaluated by estimating the maximum feasible problem size based on the number of CNOT gates.

\begin{figure}
    \centering
    \includegraphics[width=0.68\linewidth]{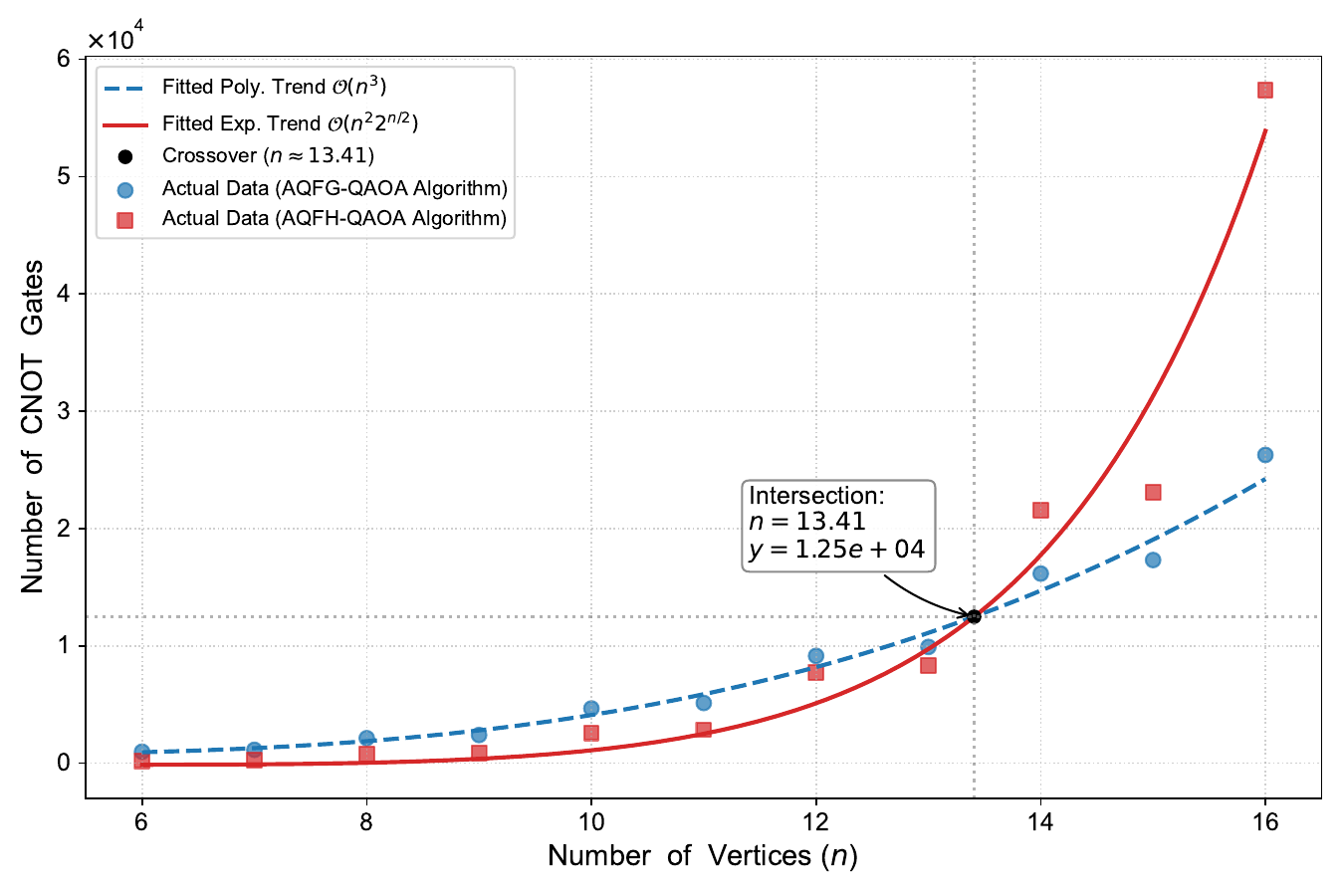}
    \caption{Number of CNOT gates in quantum circuits constructed using different algorithms to solve the MDS problem on ER graphs with different vertex sizes, where discrete data points represent actual data, and the corresponding curves show fitted growth trends based on these data.}
    \label{fig8}
\end{figure}

Figure~\ref{fig8} presents the number of CNOT gates required by the AQFG-QAOA and AQFH-QAOA algorithms for the MDS problem on ER graphs with edge probability $p_{e}=0.5$ at different vertex sizes. The discrete data points correspond to the exact values computed from Table~\ref{tab4}, while the solid curves represent fitted growth trends based on these data. As expected, the number of CNOT gates for both algorithms increases with the number of vertices $n$, but the growth rates differ substantially. Specifically, the number of CNOT gates in the AQFH-QAOA algorithm grows rapidly with $n$, exhibiting an exponential growth trend, whereas the AQFG-QAOA algorithm shows a much slower increase that is well approximated by a cubic polynomial. The fitted curves intersect at approximately $n\approx 13.41$, and the exact crossover point in terms of integer vertex size is found to satisfy $13 < n < 14$. This indicates that for ER graphs with vertex size $n \le 13$, the AQFH-QAOA algorithm requires fewer CNOT gates than the AQFG-QAOA algorithm, suggesting a potential advantage due to its lower circuit overhead. In contrast, for ER graphs with $n > 13$, the number of CNOT gates of the AQFG-QAOA algorithm is significantly lower than that of the AQFH-QAOA algorithm, demonstrating that the AQFG-QAOA algorithm has superior scalability.

As described in Section~\ref{sec4.2}, the average degree of each vertex in an $n$-vertex ER graph ($p_{e}=0.5$) is $d=\frac{n-1}{2}$. For such graphs, the AQFH-QAOA algorithm has low gate complexity when the vertex size $n \le 13$, and the average vertex degree in this case is $d=6$. Therefore, for ER graphs with $d < 6$, the AQFH-QAOA algorithm will have an advantage in gate complexity on graphs with larger vertex sizes.

\section{Numerical Simulations}
\label{sec5}

In this section, we use the MindSpore Quantum platform~\cite{xu2024mindspore} to perform numerical simulations. We compare the performance of the AQFH-QAOA algorithm with other related algorithms reviewed in Section~\ref{sec2.3}. The benchmark instances include 3-regular graphs and ER graphs with an edge probability of 0.5. Algorithmic performance is evaluated using the success probability defined in Definition~\ref{def1}.
%Consistent with most existing numerical studies, the graph size is restricted to 4 to 12 vertices due to the limitations of classical simulation. 

\begin{definition}[Success Probability]
    The success probability is defined as the probability that the final state $|\psi_p (\vec{\vgamma}, \vec{\vbeta}) \rangle$ of the quantum circuit collapses to the ground state $|x^* \rangle$ of the objective Hamiltonian upon a single measurement, i.e., 
    \begin{equation}
    P_{\mathrm{suc}} = |\langle x^* |\psi_p ( \vec{\vgamma} ,\vec{\vbeta} ) \rangle | ^{2} .
    \end{equation}
    \label{def1}
\end{definition}

The success probability quantifies the overlap between the final state $|\psi_p(\vec{\vgamma},\vec{\vbeta} )\rangle$ and the ground state $\ket{x^*}$. A larger value of $P_{\mathrm{suc}}$ indicates better algorithmic performance and a higher likelihood of obtaining a high-quality solution.

\subsection{Sensitivity analysis of the penalty coefficient}
\label{sec5.1}

In the AQFH-QAOA algorithm, the choice of the penalty coefficient $\lambda$ is crucial for balancing constraint enforcement and objective optimization. Generally, if $\lambda$ is too small, the penalty term may fail to suppress infeasible solutions that violate the dominance constraints. Conversely, excessively large values of $\lambda$ may distort the optimization landscape and make the classical optimization process more difficult. Therefore, it is important to investigate the influence of the penalty coefficient $\lambda$ on the algorithm performance.

To this end, we perform numerical experiments by varying the penalty coefficient over the set $\lambda \in \{1.0, 1.2, 1.5, \\ 2.0, 3.0, 5.0\}$, where $\lambda = 1.0$ is used as the baseline. Two types of graph instances are considered: a 6-vertex 3-regular graph and a set of 20 \ER graphs with six vertices. For the ER graphs, we report the average success probability, which is defined as the mean of the success probabilities over all graph instances with the same number of vertices.

\begin{figure}
    \centering
    \includegraphics[width=\linewidth]{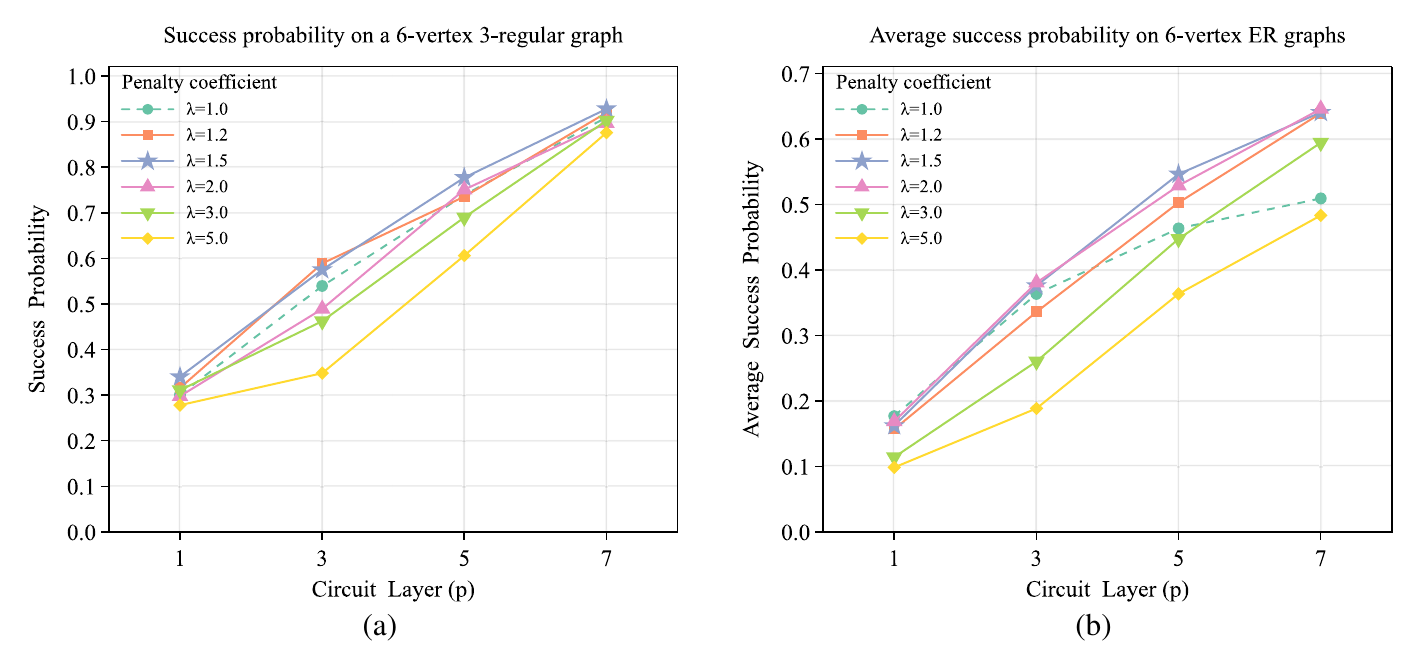}
    \caption{Success probabilities for different penalty coefficients on different types of graphs. Subgraph (a) shows the success probability on a 6-vertex 3-regular graph, while subgraph (b) shows the average success probability on 20 ER graphs.}
    \label{fig9}
\end{figure}

Figure~\ref{fig9}(a) shows the success probability of the AQFH-QAOA algorithm on a 6-vertex 3-regular graph for different values of $\lambda$. For all tested penalty coefficients, the success probability increases steadily with the circuit layer, indicating that deeper circuits generally improve the solution quality. However, noticeable differences appear among different values of $\lambda$. In particular, when $\lambda = 5.0$, the success probability is consistently lower than that obtained with the other choices for small and moderate circuit depths, which suggests that an excessively large penalty coefficient may reduce the effectiveness of the optimization. In contrast, moderate values such as $\lambda = 1.5$ and $\lambda = 2.0$ achieve higher success probabilities across most circuit layers. When the circuit layer becomes sufficiently large (e.g., $p=7$), the performance differences among most penalty coefficients become smaller, and the success probabilities converge to similar values.

Figure~\ref{fig9}(b) presents the average success probability over 20 ER graphs. A similar trend can be observed. The success probability increases monotonically with the circuit layer for all tested values of $\lambda$. For smaller penalty coefficients ($\lambda=1.2$), the performance is slightly worse than that of moderate values such as $\lambda=1.5$ and $\lambda=2.0$. Meanwhile, when $\lambda$ becomes too large (e.g., $\lambda=5.0$), the algorithm again exhibits reduced performance across different circuit layers. Overall, the results indicate that moderate penalty coefficients provide better and more stable performance across different graph instances.

These observations suggest that the AQFH-QAOA algorithm is relatively robust with respect to the choice of $\lambda$ within a reasonable range. In particular, moderate values of $\lambda$ (e.g., around $\lambda \approx 1.5$--$2.0$) tend to achieve the best performance for the tested graph instances, while excessively small or large penalty coefficients may lead to slightly degraded results. In the subsequent simulations, we adopt a fixed penalty coefficient $\lambda = 1.5$.

\subsection{Performance comparison of related algorithms}
\label{sec5.2}

We perform numerical simulations on randomly generated 3-regular graphs and ER graphs with an edge probability of 0.5, comparing the proposed algorithm with existing algorithms. The algorithms proposed by Dinneen~\textit{et al.} and Pan~\textit{et al.} require a large number of auxiliary qubits. For instance, both algorithms require 12 auxiliary qubits (18 qubits in total) to solve the MDS problem on a 6-vertex 3-regular graph, with even more auxiliary qubits needed for graphs with larger vertex sizes or higher edge densities. Consequently, due to the limitations of classical simulation, these two algorithms can only be evaluated on graphs with 4 and 6 vertices in our experiments. 

\begin{figure}[ht]
    \centering
    \includegraphics[width=0.92\linewidth]{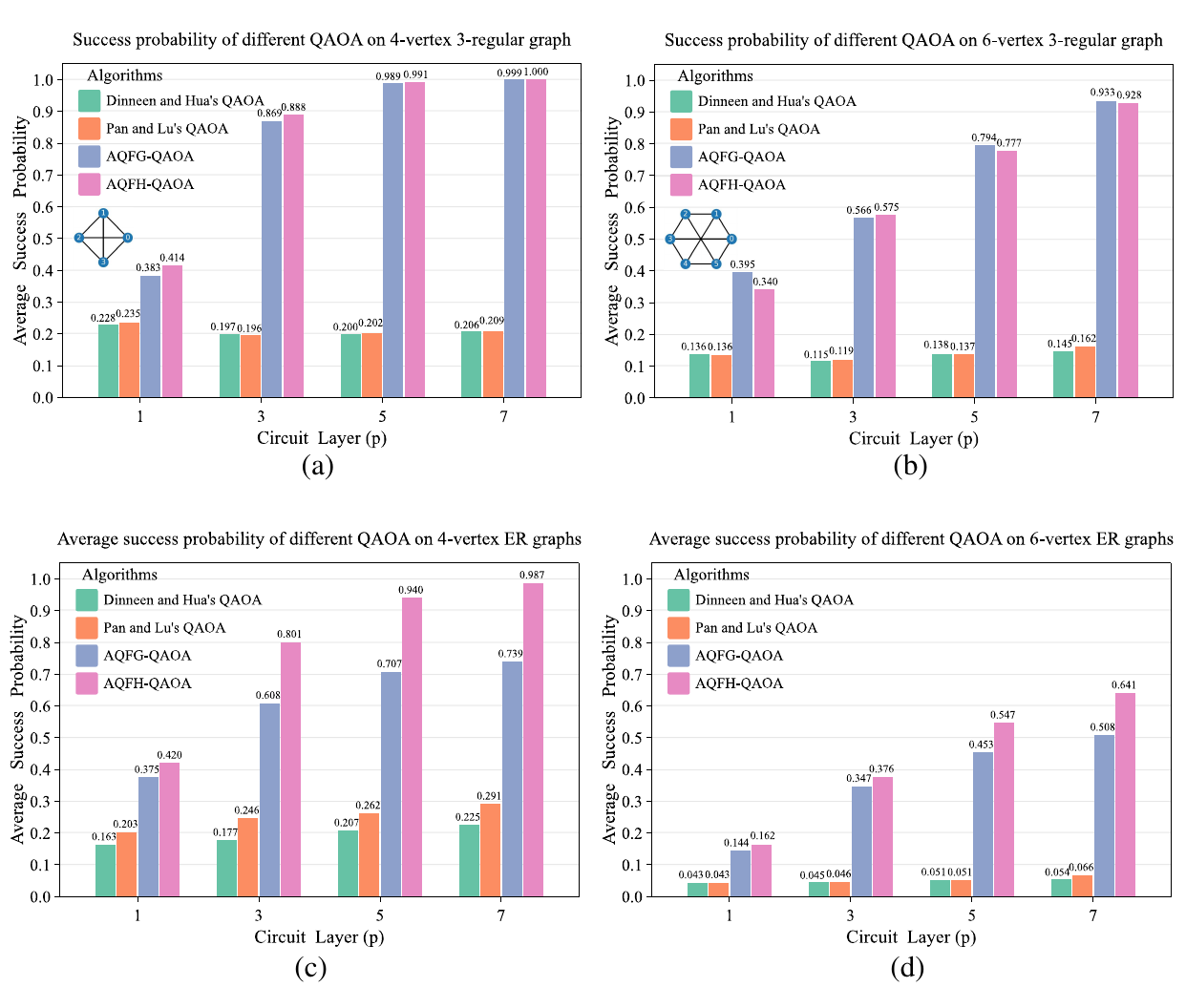}
    \caption{Success probabilities of related algorithms for the MDS problem on 3-regular and ER graphs. Here, subfigures (a) and (b) present the success probabilities on 3-regular graphs with 4 and 6 vertices, respectively. Subfigures (c) and (d) show the average success probabilities on 10 randomly generated 4-vertex ER graphs and 20 randomly generated 6-vertex ER graphs, respectively. }
    \label{fig10}
\end{figure}

Figure~\ref{fig10} demonstrates that the AQFH-QAOA algorithm consistently achieves higher success probabilities than the algorithms of Dinneen \textit{et al.} and Pan \textit{et al.} across all tested graph instances. Moreover, as the number of circuit layers $p$ increases, the success probability of the AQFH-QAOA algorithm improves significantly, while the success probabilities of the other two algorithms either fluctuate or increase only slightly, leading to an increasingly pronounced performance gap. 

Compared with the AQFG-QAOA algorithm, the AQFH-QAOA algorithm generally exhibits superior performance, with the advantage becoming more pronounced as $p$ increases. However, for the 6-vertex 3-regular graph, it performs slightly worse than the AQFG-QAOA algorithm. This behavior can be attributed to circuit depth. Although the quantum circuits of the AQFG-QAOA algorithm contain more elementary gates on small graphs, classical numerical simulations do not decompose the multi-OR-controlled phase gates in the circuits due to the limitations of the simulation platform. In contrast, the quantum circuits of the AQFH-QAOA algorithm are implemented using elementary gates through the unitary evolution of Hamiltonians. This resulted in a significantly smaller actual circuit depth of the AQFG-QAOA algorithm compared to the AQFH-QAOA algorithm for the same problem instances, which can lead to improved algorithmic performance in simulation.

Overall, these results indicate that the AQFH-QAOA algorithm significantly outperforms the QAOA algorithms of Dinneen~\textit{et al.} and Pan~\textit{et al.}. However, due to the absence of multi-OR-controlled phase gate decomposition in the simulations, its performance is comparable to or slightly worse than that of the AQFG-QAOA algorithm in some cases.

\subsection{Algorithm comparison and ablation study}
\label{sec5.3}

We further compared the AQFH-QAOA algorithm with the AQFG-QAOA algorithm on 3-regular graphs and ER graphs with 8, 10, and 12 vertices. As discussed in Section~\ref{sec5.2}, the circuit depth of the AQFG-QAOA algorithm remains much smaller for these instances because its multi-OR-controlled phase gates are not decomposed in simulation. To explore whether algorithmic performance could be improved by enhancing the ansatz, we design an ablation study. Inspired by the Multi-Angle QAOA (MA-QAOA)~\cite{herrman2022multi}, we introduce multi-angle parameters into the quantum circuit of the AQFH-QAOA algorithm, creating a variant denoted as the AQFH-MA-QAOA algorithm. This modification is not readily applicable to AQFG-QAOA due to its fixed ansatz design~\cite{guerrero2020solving}. The algorithms of Dinneen \textit{et al.} and Pan \textit{et al.} are excluded from this comparison due to their previously observed poor performance and excessive auxiliary qubit requirements, which exceeded our simulation capabilities.

A brief overview of MA-QAOA is provided in Appendix~\ref{secB}. In the ablation study, we retained the quantum circuit structure of the AQFH-QAOA algorithm and only modified its parameters. Unlike the standard QAOA described in Section~\ref{sec2.2}, we assign an independent angle parameter to each parameterized quantum gate corresponding to the phase separation operator $U_P(\gamma)$ and the mixing operator $U_M(\beta)$. 

\begin{figure}[ht]
    \centering
    \includegraphics[width=\linewidth]{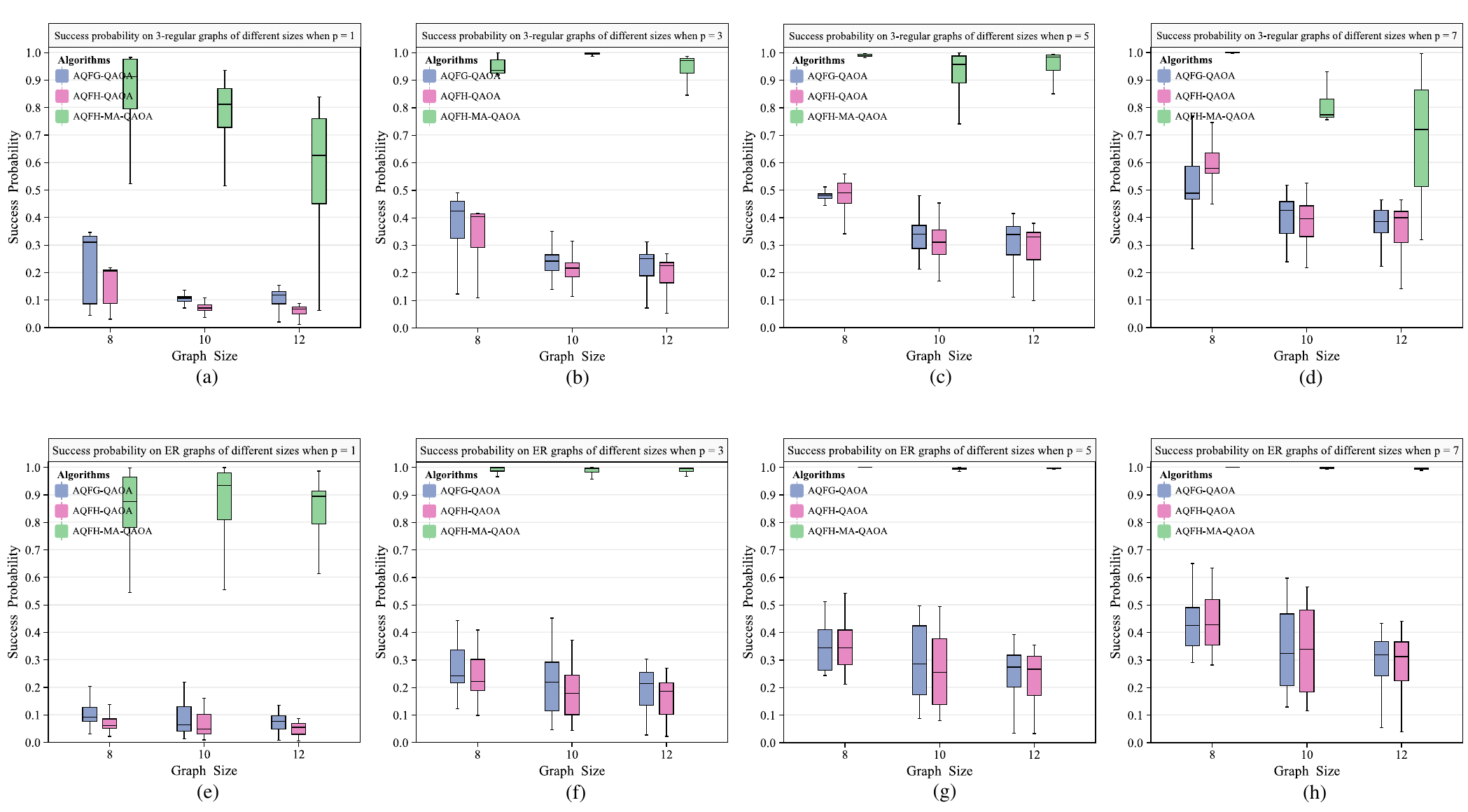}
    \caption{Success probabilities of different algorithms for the MDS problems on 3-regular and ER graphs with different numbers of circuit layers. Here, subfigures (a)-(d) correspond to the success probabilities on 3-regular graphs at circuit layers $p=1, 3, 5, 7$, respectively. And subfigures (e)-(h) correspond to the success probabilities on ER graphs at circuit layers $p=1, 3, 5, 7$, respectively. There are 20 graphs for each vertex size, and outlier data are omitted for clarity.}
    \label{fig11}
\end{figure}

Figure~\ref{fig11} displays the distribution of success probabilities across 20 graph instances. Several key trends can be observed from the results. First, the AQFH-MA-QAOA algorithm consistently achieves higher success probabilities across all tested circuit layers and graph types, outperforming both the AQFH-QAOA and AQFG-QAOA algorithms. Second, the success probabilities of AQFH-QAOA and AQFG-QAOA decline significantly as the graph size increases, reflecting the growing complexity of the problem. In contrast, the AQFH-MA-QAOA algorithm exhibits stable performance on most of the graph instances tested. Third, the advantage of the AQFH-MA-QAOA algorithm is particularly pronounced on ER graphs. For the circuit layer $p \geq 3$, it achieves both higher median success probabilities and tighter interquartile ranges, suggesting stronger robustness to variations in graph structure. 

These results demonstrate that the AQFH-MA-QAOA algorithm enhances the expressivity and adaptability of the quantum circuit, enabling a more effective exploration of the solution space. Overall, the AQFH-MA-QAOA algorithm provides higher solution quality than existing algorithms for the MDS problem and remains competitive as the problem size grows, making it a promising candidate for implementation on near-term quantum devices.

\section{Conclusions}
\label{sec6}

In this study, we provide two auxiliary-qubit-free QAOA algorithms for solving the MDS problem, namely AQFH-QAOA and AQFG-QAOA. The AQFH-QAOA algorithm reformulates the inequality constraints of the MDS problem into equality constraints using Boolean algebra, avoiding the use of auxiliary qubits in the circuit construction. Specifically, the inequality constraints are first reformulated as equivalent Boolean expressions and then converted into arithmetic expressions using generalized Boolean algebra identities. These arithmetic expressions are subsequently incorporated into the objective function as penalty terms to construct the objective Hamiltonian, which allows quantum circuits to be constructed based on the unitary evolution of the Hamiltonian without the need for auxiliary qubits. In contrast, existing QAOA algorithms typically rely on a large number of surplus variables to handle inequality constraints, which inevitably increases circuit overhead by introducing auxiliary qubits. Theoretical analysis demonstrates that this algorithm is particularly suitable for sparse graphs with small vertex degrees, not only reducing qubit overhead but also having lower gate complexity.

Furthermore, we utilize the decomposition technique of multi-controlled $X$ gates to provide an auxiliary-qubit-free optimized implementation method for Guerrero's QAOA, which is called the AQFG-QAOA algorithm. A detailed gate complexity analysis of the two algorithms shows that the AQFH-QAOA algorithm is well-suited for MDS problems on arbitrary 3-regular graphs, while for ER graphs with an edge probability of 0.5, the preferred algorithm depends on the graph size. Numerical simulations on randomly generated 3-regular and ER graphs demonstrate that the AQFH-QAOA algorithm achieves higher solution quality than most existing QAOA algorithms. However, due to the inability of the simulation platform to decompose multi-OR controlled phase gates in AQFG-QAOA, its performance is comparable to or slightly better than the AQFH-QAOA algorithm. Nevertheless, an enhanced variant of AQFH-QAOA based on the multi-angle QAOA framework, referred to as AQFH-MA-QAOA, is shown to effectively exploit the superior expressive power of independent parameters at low circuit depths, providing a competitive solution to the MDS problem on near-term quantum devices.

Although the proposed algorithms are evaluated using ideal simulations, their performance on real quantum devices may be affected by quantum noise. In particular, the circuit depth increases with the vertex degree, which may lead to the accumulation of gate errors, especially for two-qubit gates such as CNOT gates. Nevertheless, the proposed auxiliary-qubit-free algorithm significantly reduces the total number of qubits compared with previous approaches. Since qubit decoherence and readout errors typically scale with the number of physical qubits, this reduction in qubit count may partially mitigate the impact of increased circuit depth. Therefore, the proposed algorithm has the potential to provide a trade-off between circuit width and circuit depth for implementation on near-term quantum devices.

Looking ahead, this study can be expanded in two directions. First, it would be valuable to implement and benchmark the two algorithms on real quantum hardware to assess their practical performance under realistic noise conditions. Second, the proposed Boolean algebra-based approach for handling inequality constraints can be generalized to other combinatorial optimization problems with inequality constraints, thereby enabling the design of more resource-efficient quantum algorithms.

\section*{Data availability statement}
Data available on request from the authors.

\addcontentsline{toc}{chapter}{Acknowledgment}
\section*{Acknowledgment}
We thank Jiacheng Fan, Lingxiao Li, and Jing Li for their useful discussions on the subject. This work is supported by the National Natural Science Foundation of China (Grant Nos. 62372048, 62272056, 62371069, 62171056) and the National Key Laboratory of Security Communication Foundation (2025, 6142103042503).

\begin{appendices}
\appendix 

\renewcommand{\thefigure}{\thesection\arabic{figure}}\setcounter{figure}{0}

\section{Guerrero's gate decomposition method}
\label{secA}

Guerrero provides an explicit decomposition of multi-OR-controlled phase gates in Ref.~\cite{guerrero2020solving}. Specifically, such a gate can be decomposed into two multi-OR-controlled $X$ gates and a single-controlled phase gate by using an auxiliary qubit, as illustrated in Figure~\ref{figA1}. Subsequently, each multi-OR-controlled $X$ gate with $n$ control qubits can be further decomposed into $2n-3$ OR-controlled Toffoli gates using $n-2$ auxiliary qubits, as shown in Figure~\ref{figA2}. Each OR-controlled Toffoli gate can be implemented using one standard Toffoli gate and two CNOT gates, as depicted in Figure~\ref{figA3}. Furthermore, Figure~\ref{figA4} illustrates a method for decomposing a single-controlled phase gate with $|0\rangle$ as the control qubit into elementary gates consisting of CNOT gates and single-qubit gates. As shown, such a gate can be implemented using two CNOT gates and four single-qubit gates.

\begin{figure}[h]
    \centering
    \includegraphics[width=0.65\linewidth]{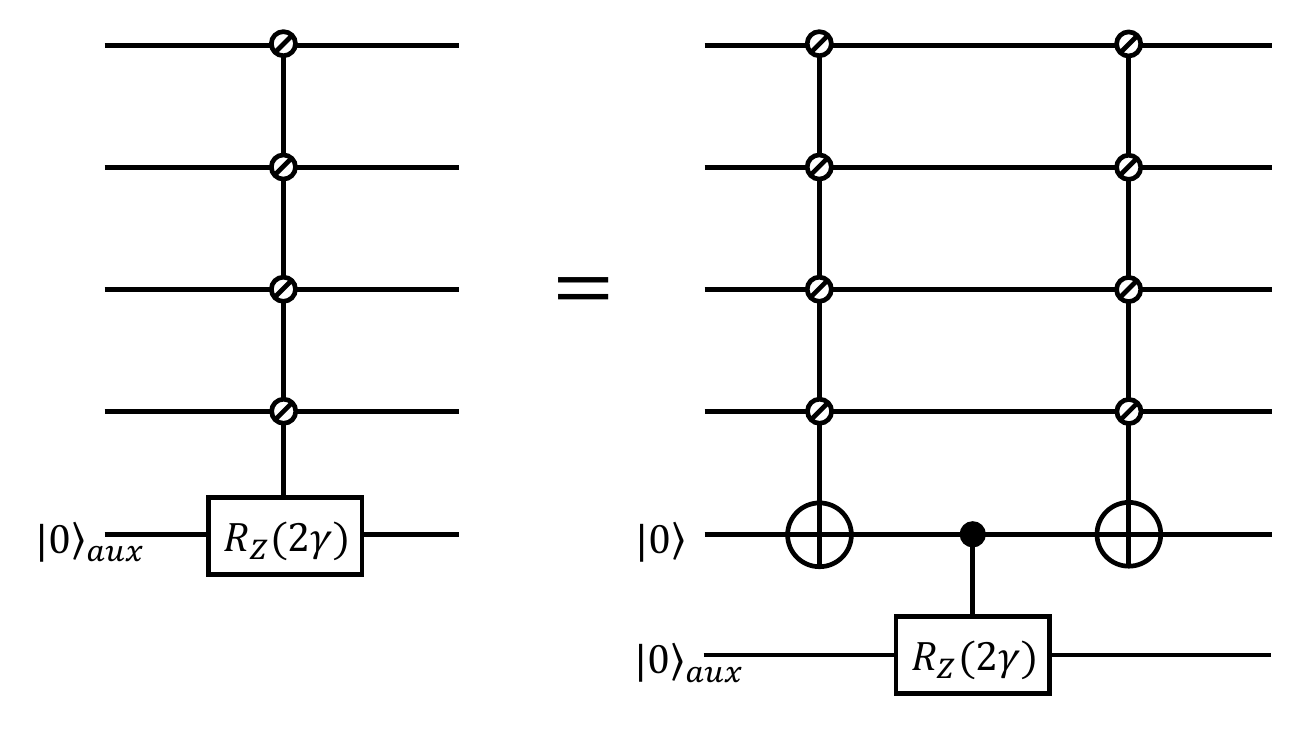}
    \caption{ Decomposition of a multi-OR-controlled phase gate with 4 control qubits using an auxiliary qubit and two multi-OR-controlled $X$ gates. Adapted from Figure 15 of Ref.~\cite{guerrero2020solving}.}
    \label{figA1}
\end{figure}

\begin{figure}[h]
    \centering
    \includegraphics[width=0.65\linewidth]{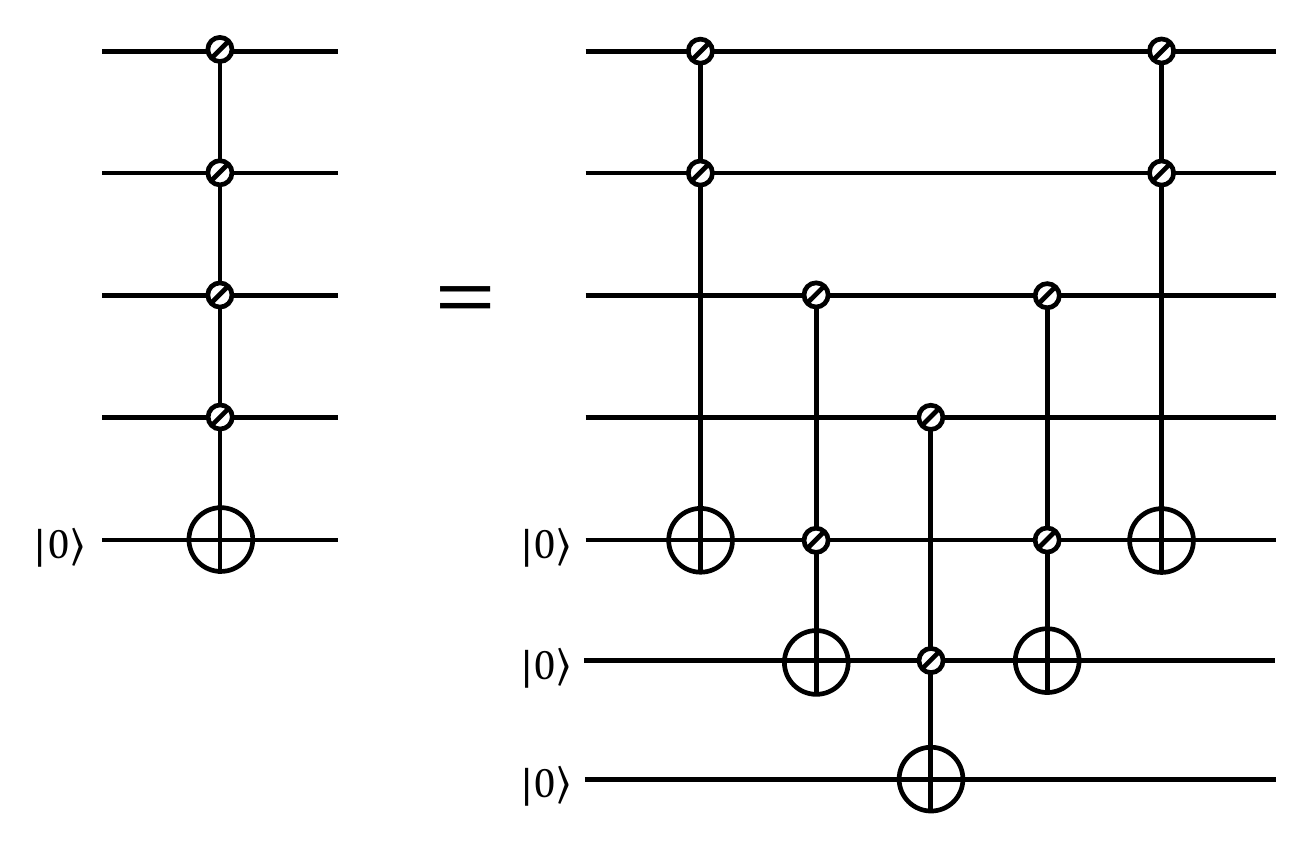}
    \caption{ Decomposition of a multi-OR-controlled $X$ gate with $n$ control qubits using $n-2$ auxiliary qubits and $2n-3$ OR-controlled Toffoli gates. Here, $n=4$, adapted from Figure 13 of Ref.~\cite{guerrero2020solving}.}
    \label{figA2}
\end{figure}

\begin{figure}
    \centering
    \includegraphics[width=0.5\linewidth]{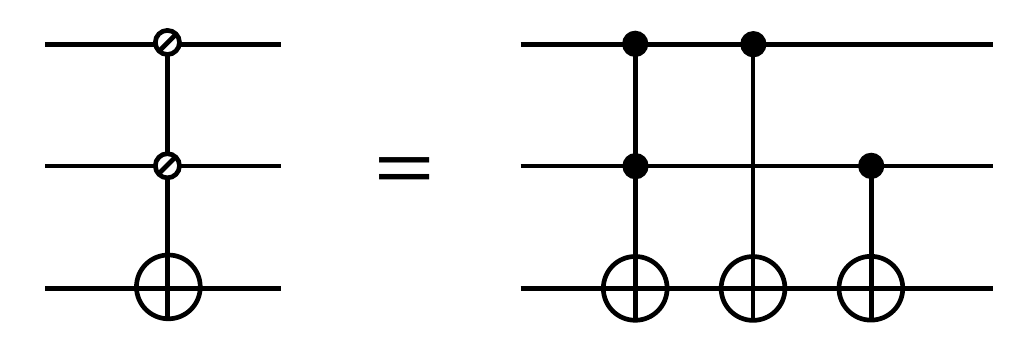}
    \caption{ Decomposition of an OR-controlled Toffoli gate using a single Toffoli gate and two CNOT gates. Adapted from Figure 11 of Ref.~\cite{guerrero2020solving}.}
    \label{figA3}
\end{figure}

\begin{figure}
    \centering
    \includegraphics[width=0.9\linewidth]{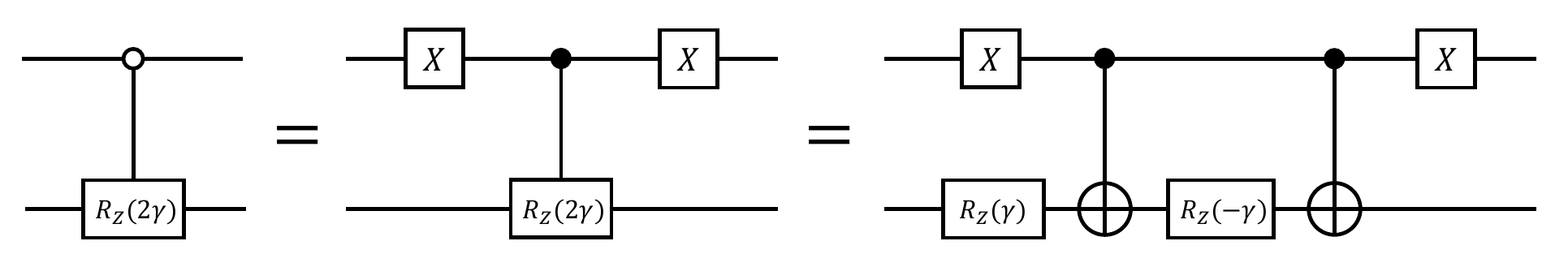}
    \caption{ Decomposition of a single-controlled phase gate using CNOT gates and single-qubit gates. }
    \label{figA4}
\end{figure}

According to Guerrero's decomposition method, an $n$-OR-controlled phase gate can be decomposed into $4n-6$ Toffoli gates, $8n-12$ CNOT gates, and a single-controlled phase gate, using a total of $n-1$ auxiliary qubits. These gates can be further decomposed into elementary gates, facilitating their implementation on quantum hardware. A complementary discussion of this decomposition scheme for multi-OR-controlled phase gates can also be found in Appendix E of Ref.~\cite{donkers2022qpack}. 

Next, we explicitly determine the number of elementary gates and auxiliary qubits required to decompose an $n$-vertex undirected graph using the methods described above. For simplicity, we assume that the average degree of the vertices in the undirected graph is $d$. For such a graph, each vertex has $d$ neighbors, and the multiple-OR-controlled phase gate that implements the corresponding $T_k(x)$ clause contains $d+1$ control qubits (including the vertex itself). As described above, such a $(d+1)$-OR-controlled phase gate can be decomposed into $4d-2$ Toffoli gates, $8d-4$ CNOT gates, and a single-controlled phase gate, using $d$ auxiliary qubits. A Toffoli gate can be further decomposed into 3 CNOT gates and 4 single-qubit gates using the congruent modulo phase shifts method~\cite{barenco1995elementary}, while a single-controlled phase gate can be decomposed into 2 CNOT gates and 2 single-qubit gates, as illustrated in Figure~\ref{figA4}. Consequently, a $(d+1)$-OR-controlled phase gate can be implemented using $20d-8$ CNOT gates and $16d-6$ single-qubit gates, together with $d$ auxiliary qubits. Since the degree of each vertex is assumed to be $d$, there are $n$ such $(d+1)$-OR-controlled phase gates in total. Therefore, implementing all $T_k(x)$ clauses requires $20nd-8n$ CNOT gates, $16nd-6n$ single-qubit gates, and $d$ auxiliary qubits.

Using the decomposition method shown in Figure~\ref{figA4}, implementing all the single-controlled phase gates corresponding to $n$ $D_k(x)$ clauses requires $2n$ CNOT gates and $4n$ single-qubit gates. Including the $n$ Hadamard gates used to prepare the initial state and the $n$ single-qubit $R_X(2\beta)$ gates in the mixing operator, the single-layer quantum circuit of Guerrero's QAOA algorithm for an $n$-vertex undirected graph with average vertex degree $d$ contains a total of $20nd-6n$ CNOT gates and $16nd$ single-qubit gates, using $d$ auxiliary qubits.

\renewcommand{\theequation}{\thesection\arabic{equation}}\setcounter{equation}{0}

\renewcommand{\thefigure}{\thesection\arabic{figure}}\setcounter{figure}{0}

\section{Introduction to Multi-angle QAOA}
\label{secB}

The Multi-Angle Quantum Approximate Optimization Algorithm (MA-QAOA) is a generalization of standard QAOA that introduces a separate angle parameter for each term in the phase separation operator and the mixing operator, instead of using a single parameter per operator per layer~\cite{herrman2022multi}. This enhanced parametrization improves the expressive power of the quantum circuit, enabling it to explore a wider range of Hilbert space and potentially yielding higher-quality solutions.

In the standard QAOA with a circuit layer $p$, the variational quantum state is generated by alternately applying the unitary evolution operators of the objective Hamiltonian $H_{P}$ and the mixing Hamiltonian $H_{M}$, as follows:
\begin{equation}
    \ket{\psi_p(\vec{\vgamma},\vec{\vbeta})} = \prod_{k=1}^p e^{-i\beta_k H_M} e^{-i \gamma_k H_P} \ket{\psi_0},
    \label{app eq: standard ansatz state}
\end{equation}
where $\vec{\vgamma} = (\gamma_1,\ldots,\gamma_p)$ and $\vec{\vbeta} = (\beta_1,\ldots,\beta_p)$ are the variational parameters. The $p$-layer circuit of the standard QAOA contains $2p$ parameters $\left \{ \gamma_{k}, \beta_{k} \right \}_{k=1}^{p} $. Studies have shown that fewer parameters may limit the ability of quantum circuits to explore the solution space, potentially leading to suboptimal solutions~\cite{herrman2022multi,vijendran2024expressive}.

MA-QAOA overcomes this limitation by assigning independent parameters to each parameterized quantum gate in
the circuit. Specifically, since the objective Hamiltonian $H_{P}$ and the mixing Hamiltonian $H_{M}$ are both sums of matrices, Herrman \textit{et al.} express them as $H_P = \sum_{u} C_{u}$ and $H_M = \sum_{v} B_v$. The phase separation operator of the $k$-th layer is reformulated as 
\begin{equation}
    U_{P}\left(\vec{\vgamma}_{k}\right) = e^{-i\vec{\vgamma}_{k} \sum_{u} C_{u}}= e^{-i \sum_{u} \gamma_{k, u} C_{u}} = \prod_{u} e^{-i \gamma_{k, u} C_{u}} ,
    \label{app eq: ma-qaoa phase separation operator}
\end{equation}
where $U_{P}\left(\vec{\vgamma}_{k}\right) = e^{-i\vec{\vgamma}_{k} H_{P}}$ is the unitary operator of the objective Hamiltonian $H_{P}$, with the parameter vector $\vec{\vgamma}_{k} = (\gamma_{k, u_{1}}, \gamma_{k, u_{2}}, \ldots)$. Similarly, the mixing operator becomes
\begin{equation}
    U_{M}\left(\vec{\vbeta}_{k}\right) = e^{-i\vec{\vbeta}_{k} \sum_{v} B_{v}}= e^{-i \sum_{v} \beta_{k, v} B_{v}} = \prod_{v} e^{-i \beta_{k, v} B_{v}}, 
    \label{app eq: ma-qaoa mixing operator}
\end{equation}
where $U_{M}\left(\vec{\vbeta}_{k}\right) = e^{-i\vec{\vbeta}_{k} H_{M}}$ is the unitary operator of the mixing Hamiltonian $H_{M}$, with the parameter vector $\vec{\vbeta}_{k} = (\beta_{k, v_{1}}, \beta_{k, v_{2}}, \ldots)$.

Hence, the ansatz state prepared by MA-QAOA with circuit layer $p$ is:
\begin{equation}
    \ket{\psi_p \Big(\{\vec{\vgamma}_{k}, \vec{\vbeta}_{k}\}_{k=1}^p \Big)} 
    = \prod_{k=1}^p \Big(\prod_{v} e^{-i \beta_{k,v} B_v} \prod_{u} e^{-i \gamma_{k,u} C_{u}} \Big) \ket{\psi_0}.
\end{equation}

By allowing each quantum gate to have an independent rotation angle, MA-QAOA generally achieves a higher success probability with fewer layers than standard QAOA~\cite{herrman2022multi}. Figure~\ref{figB1} illustrates the difference between the two algorithms, showing that the circuit structure is identical except for the parameters.

\begin{figure}[htbp]
    \centering
    \includegraphics[width=0.7\linewidth]{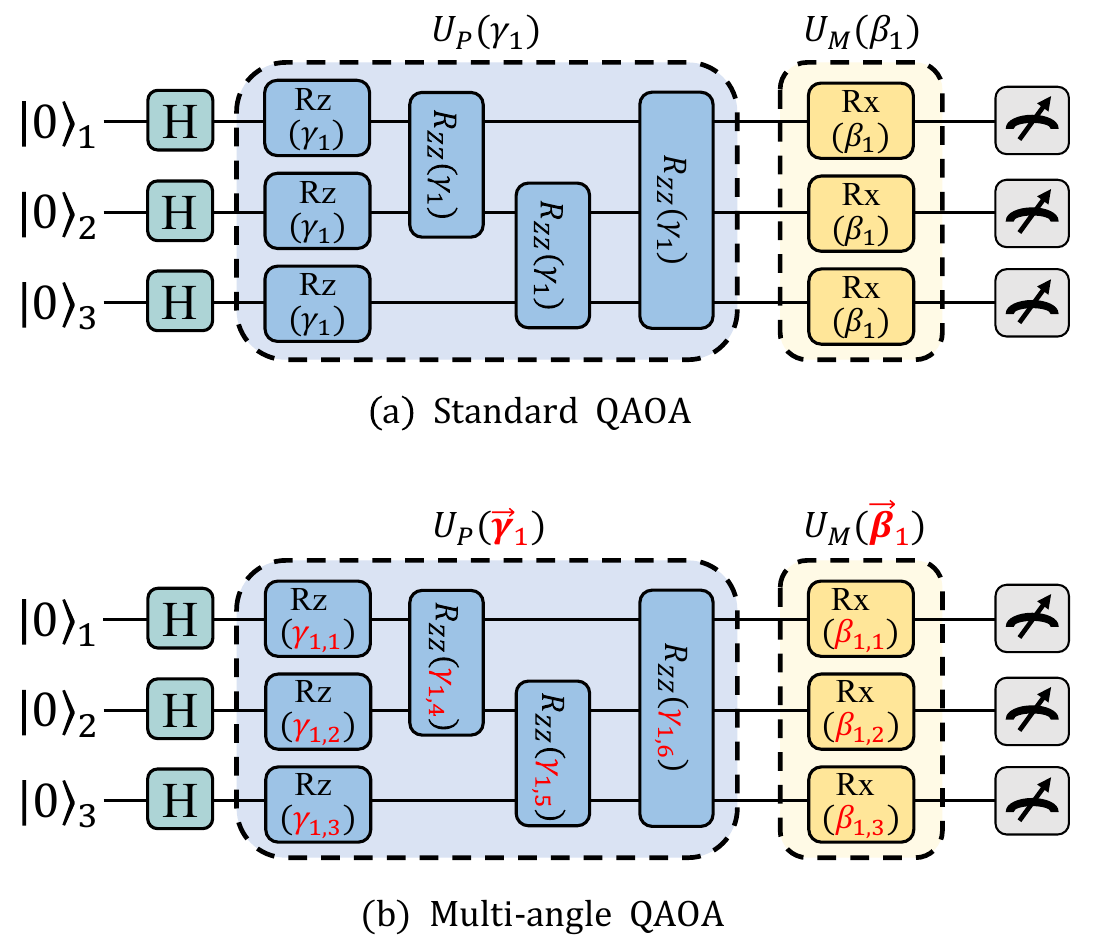}
    \caption{ Single-layer quantum circuits of (a) standard QAOA and (b) multi-angle QAOA. }
    \label{figB1}
\end{figure}

\end{appendices}

~\\
\addcontentsline{toc}{chapter}{References}
%\bibliographystyle{iopart-num.bst} 
%\bibliography{references.bib}

%The following is the reference format for copying .bbl files.

\providecommand{\newblock}{}

\end{document}